\documentclass[10pt]{article}

\usepackage[T1]{fontenc}
\usepackage[utf8]{inputenc}
\usepackage{lmodern}
\usepackage[margin=1in]{geometry}
\usepackage{microtype}
\usepackage{amsmath,amssymb,amsthm}
\usepackage{booktabs}
\usepackage{graphicx}
\usepackage{float}
\usepackage{enumitem}
\usepackage[numbers,sort&compress]{natbib}
\usepackage{xcolor}
\usepackage{url}
\usepackage[hidelinks]{hyperref}
\hypersetup{
  pdftitle={AccretionLink: On-Device Auditing of Exposure-Control Attacks on Attribute Inference},
  pdfauthor={Faruk Alpay and Taylan Alpay}
}

\newif\ifresultsready
\resultsreadytrue
\newif\ifdeviceresultsready
\deviceresultsreadytrue

\newcommand{\IfDeviceResults}[2]{\ifdeviceresultsready#1\else#2\fi}

\newcommand{\AggregatePrefixEight}{0.01595}
\newcommand{\AggregatePrefixEightCI}{[0.00890,\,0.02336]}
\newcommand{\IntegrityWins}{6/109}

\newcommand{\ModelSha}{e49a10b576929695}
\newcommand{\NpuMedianMs}{51.042}
\newcommand{\NpuPnnMs}{52.077}
\newcommand{\TokenizerMedianMs}{2.284}
\newcommand{\TokenizerPnnMs}{4.988}
\newcommand{\StateHeadMedianMs}{0.0458}
\newcommand{\StateHeadRatio}{0.000891}

\title{AccretionLink: On-Device Auditing of Exposure-Control Attacks on Attribute Inference}

\author{
Faruk Alpay$^{1}$\thanks{Corresponding author: \texttt{alpay@lightcap.ai}.}
\and Taylan Alpay$^{2}$\\[0.35em]
\small $^{1}$Department of Computer Engineering, Bah\c{c}e\c{s}ehir University, Istanbul, T\"urkiye\\
\small $^{2}$Department of Aerospace, University of Turkish Aeronautical Association, Ankara, T\"urkiye\\[0.35em]
\small Faruk Alpay: \texttt{faruk.alpay@bahcesehir.edu.tr}\\
\small Taylan Alpay: \texttt{s220112602@stu.thk.edu.tr}
}

\date{}

\newtheorem{proposition}{Proposition}
\newtheorem{theorem}{Theorem}
\theoremstyle{definition}
\newtheorem{remark}{Remark}

\setlist[itemize]{leftmargin=*,topsep=3pt,itemsep=2pt}
\begin{document}
\maketitle

\begin{abstract}
Exposure control lets an adversary rank authentic public posts to strengthen
private-attribute inference without altering content. AccretionLink defines
confidentiality and integrity games for this attack, models bounded selection
odds through partial identification, and constructs dependence-aware
time-uniform e-processes. On 52 held-out synthetic profiles, odds-four
selection reduced aggregate negative log likelihood at every horizon. At
eight posts the advantage was \AggregatePrefixEight{} nats (95\% CI
\AggregatePrefixEightCI), three of four target effects survived Holm
adjustment, and label-blind model-guided selection caused \IntegrityWins{}
high-confidence false reversals. On 142 PAN15 test profiles, exploratory selection produced a
0.01227-nat advantage but no reversal. A separate TF--IDF selector retained a
0.01470-nat advantage against the unchanged G5 target, while matched identity
shuffling did not reproduce it. Pixel 10 encoded all 1,622 held-out posts once
with a fallback-free Tensor G5 graph. A P-256 checkpoint authenticated the
selected-replay, actual-model, native-report, and operation digests; local
\texttt{KeyInfo} identified the signing key as StrongBox-backed.

\end{abstract}

\section{Introduction}
\label{sec:introduction}

Public content is not necessarily harmless content. Attribute-inference
attacks combine apparently ordinary activity to recover information a user did
not state in the attacked context\citep{gong2016attribute,staab2024beyond}.
Content can remain authentic and access-authorized while the observation
transcript is adversarially chosen. A ranking service, recommender,
intermediary, or local collector can control which legitimate items reach a
profiler and in what order. This exposure-only capability changes the evidence
available to attribute inference without compromising an account, fabricating
a post, or altering a signed object.

We call this capability \emph{exposure control}. It creates two security
failures. The confidentiality failure is a reduction in the adversary's proper
log loss relative to a paired uniform exposure of the same size. The integrity
failure occurs when selection turns a profile that the same adversary
classified correctly under uniform exposure into a high-confidence false
profile. The latter matters whenever inferred profiles are consumed as stable
facts. Both attacks operate entirely on authentic source material.

Exposure control also makes the observed stream statistically ambiguous. A
high cue rate may reflect high source prevalence, aggressive selection, or
both. AccretionLink models the selector as a bounded odds channel and audits
the stream at two levels. Partial identification describes which source
prevalences remain compatible with the selected observations. A cue-rate
e-process raises a time-uniform warning when the declared selection bound is
violated. A separate sealed-label block e-process consumes the frozen attack
posterior and tests whether text and synthetic profile labels remain
exchangeable. Predictive loss, leakage bits, and block evidence consequently
refer to one adversary and one security experiment.

The audit executes where the candidate posts reside. Sending raw drafts to a
remote profiler would create another disclosure channel. Re-encoding every
prefix would also place repeated host and accelerator work on the security
path. AccretionLink maps each admitted post once with a Tensor-G5-AOT semantic
encoder, updates a fixed-size prefix state, and advances the posterior and
evidence recurrences online. Separate device traces record NPU-only execution,
zero fallback nodes, asynchronous fence return, once-only held-out encoding,
and the relative service times of tokenization, G5 dispatch, and posterior
update. The G5 model/runtime boundary dominates those measured stages; this
localizes the registered-stream bottleneck beyond the measured CPU host stages.

The contributions are:

\begin{itemize}
    \item A formal exposure-control adversary that changes only the authentic
    observation transcript, with proper-score confidentiality and
    high-confidence integrity games.
    \item A dependence-safe audit from release-odds bounds to partial
    identification and anytime-valid evidence, including a finite-horizon
    false-warning inflation bound for understated selection limits.
    \item A Pixel 10 realization with once-only, fallback-free Tensor G5
    extraction and a replay verifier that commits selected-input, actual-model,
    native-report, operation, and chain-state digests to a P-256 checkpoint.
\end{itemize}

On sealed SynthPAI profiles, three of four target effects survive Holm
correction and model-guided selection yields six high-confidence false
profiles. An exploratory PAN15 check retains the proper-score direction on
real profiles but yields no such reversal. A post-hoc sensitivity control uses
a separate lexical selector against the unchanged G5 target, so the observed
score advantage does not require the target model to select its own inputs.

\section{Related work}

\paragraph{Attribute inference as a security attack.}
Attribute inference uses public behavior to recover fields a user has not
disclosed to the attacker. Social-graph and behavioral attacks established
this as an Internet privacy threat before current language models
\citep{gong2016attribute}; modern text models recover attributes from sparse,
ordinary posts\citep{staab2024beyond}. AttriGuard perturbs public features to
reduce attribute inference\citep{jia2018attriguard}. AccretionLink studies a
different adversarial control surface. Its selector leaves every feature
unchanged and chooses only the authentic exposed subset and order. SynthPAI
provides synthetic forum profiles with controlled labels, allowing this attack
to be evaluated without constructing a new real-person corpus
\citep{yukhymenko2024synthpai}.

The PAN shared tasks establish age and other author-profiling benchmarks on
Twitter text\citep{rangel2015pan,pan15}. We use the English PAN15 archive only
as a separate exploratory stress test. It contributes real-profile text but no
observed ranking log, so it cannot identify an actual platform's selection
policy.

Cumulative profiling is established. PAPI and HolmesEye aggregate multiple
personal images\citep{liu2025holmes}; SopriBench and Argus study cross-post,
user-level multimodal leakage\citep{peng2026sopribench}. Our object is the
hidden exposure mechanism. Two observers can see
sequences of equal length but receive different privacy evidence because one
sequence was selected adaptively.

\paragraph{Selection and partial identification.}
Selection can make population quantities unidentified without assumptions on
the observation mechanism. Sensitivity analysis and partial identification
therefore report sets of compatible quantities rather than a point estimate
\citep{manski2003partial,rosenbaum2002observational}. The odds-multiplier model
isolates each conclusion's dependence on a user-supplied selection bound
without positing a known platform policy. The finite benchmark uses an
exponential-race form of
weighted sampling without replacement, related to Plackett-Luce ranking
models\citep{luce1959choice,plackett1975permutations}.

\paragraph{Anytime-valid evidence and predictive calibration.}
Nonnegative test martingales and e-processes permit time-uniform testing under
optional stopping\citep{ville1939etude,howard2020timeuniform,ramdas2023savi}.
Predictable betting gives constructive processes for bounded observations
\citep{waudbysmith2024betting}, while arithmetic averaging preserves e-value
validity without independence\citep{vovk2021evalues}. Building on these
results, AccretionLink derives predictable Bernoulli ceilings from
remaining-inventory prevalence and release-odds bounds, aggregates dependent
streams within profile blocks, and bounds false-warning inflation when the
declared selection limit is understated. A second block e-process converts
frozen classifier posteriors into leakage bits and anytime-valid evidence under
a finite-population label-permutation null. Classifier probabilities
are temperature scaled on a calibration split\citep{guo2017calibration}.
Expected log posterior gain can be interpreted through a variational mutual-
information lower bound under the stated population-prior conditions
\citep{barber2003im}; an individual realization has no such population
interpretation.

\paragraph{Local integrity records.}
Local computation reduces transmission of draft content, but local execution
alone does not establish correctness. The audit record uses SHA-256 and ECDSA
as standardized primitives\citep{nistsha,nistdss} and Android Keystore for key
isolation\citep{androidkeystore}. A signed hash-chain checkpoint authenticates
the exported prefix under the local key. It provides neither remote
attestation nor proof that the model or its outputs are correct.

\section{Exposure-control security game}
\label{sec:threat}

\subsection{Actors and capabilities}

For synthetic profile $u$, let $Y_u^{(a)}\in\mathcal Y_a$ be the sealed value
of target $a$ and let
$\mathcal O_u=\{O_{u,1},\ldots,O_{u,n_u}\}$ be its authentic post pool. The
challenger gives $\mathcal O_u$, a frozen profiler $Q$, and a budget $h$ to a
selector $\mathcal A_S$. The selector returns distinct indices
$S_{u,1:h}$ and may adapt each choice to earlier profiler outputs. It never
receives $Y_u^{(a)}$. The profiler maps the selected prefix to a normalized
posterior
\begin{equation}
q_{u,h}^{(a)}=Q_a(O_{u,S_{u,1}},\ldots,O_{u,S_{u,h}}).
\label{eq:attackposterior}
\end{equation}
Selector and profiler may collude and know all model parameters, class priors,
candidate posts, and past outputs. The controlled attack additionally receives
the predeclared cue bit for each post. The model-guided attack receives no cue
metadata and may query $Q$ on remaining authentic candidates.

The attack has no account credential, network interception capability, or
write access to post contents. It cannot insert, edit, paraphrase, or duplicate
an item, and it cannot inspect a held-out label. Each admitted post retains its
source digest. Exact repetition is a dependence diagnostic outside the primary
game. This boundary isolates exposure control from content poisoning and
account compromise.

\subsection{Confidentiality and integrity games}

The challenger uses common random keys to produce paired uniform and selected
sequences of equal length. Write $q_{u,h,1}^{(a)}$ for uniform exposure and
$q_{u,h,4}^{(a)}$ for the registered $\lambda=4$ exposure. With log loss
$\ell(q,y)=-\log q(y)$, the empirical confidentiality advantage is
\begin{equation}
\operatorname{Adv}^{(a)}_{\mathrm{conf}}(h)
=\mathbb E_u\!\left[
\ell(q_{u,h,1}^{(a)},Y_u^{(a)})
-\ell(q_{u,h,4}^{(a)},Y_u^{(a)})
\right].
\label{eq:confadv}
\end{equation}
A positive value means that exposure control reduced the fitted adversary's
loss without increasing its observation budget. This is a proper-score attack
advantage over a paired release policy, rather than a claim of cryptographic
indistinguishability.

The integrity game asks whether selection can manufacture a trusted but false
profile from unchanged evidence. At confidence threshold $\gamma=0.8$, profile
$u$ is an integrity win for target $a$ when
\begin{equation}
\begin{split}
W_{u}^{(a)}={}&
\mathbf 1\{\arg\max_y q_{u,8,1}^{(a)}(y)=Y_u^{(a)}\}\,
\mathbf 1\{\arg\max_y q_{u,8,4}^{(a)}(y)\neq Y_u^{(a)}\}\,\\
&\mathbf 1\{\max_y q_{u,8,4}^{(a)}(y)\geq\gamma\}.
\end{split}
\label{eq:intwin}
\end{equation}
We report $\Pr(W_u^{(a)}=1)$ among profiles classified correctly under the
paired uniform sequence. The selector remains label-blind throughout the
attack; labels are unsealed only for scoring.

These games cover two exposure adversaries. A bounded selector changes release
odds for a predeclared cue by $\lambda\in\{2,4\}$. A score-guided selector
chooses the remaining post that maximizes its frozen objective after the
candidate prefix update. The first identifies a controlled selection axis; the
second realizes a model-aware integrity attack.

\subsection{Local auditor and evidence boundary}

The defender is a local auditor with legitimate access to the candidate pool.
It executes the same frozen profiler under paired release policies, estimates
Equations~\eqref{eq:confadv} and \eqref{eq:intwin}, and advances the sequential
tests in Section~\ref{sec:method}. The auditor detects exposure-conditioned
leakage; it does not sanitize a post or prevent a remote selector from acting.
Its Android package requests no Internet permission, and raw text is excluded
from the exported transcript. A compromised operating system, keyboard,
display capture path, or signing key lies outside the device threat boundary.

Each record commits content, model, execution-evidence, and output digest
fields to the previous chain root. The record layer has a separate tampering game. Let $H$
be SHA-256, $\mathsf{Sig}$ be ECDSA P-256 with SHA-256, and $pk^\star$ be the
audit public key fixed by a verifier before it receives a disputed export. For
record $R_i=(i,t_i,e_i,d_i^{\rm in},d_i^{\rm model},d_i^{\rm exec},
d_i^{\rm out},s_i)$, the implementation computes
\begin{equation}
  r_i=H(\mathsf{enc}(R_i,r_{i-1})),\qquad r_{-1}=0^{256},
  \label{eq:recordchain}
\end{equation}
where $\mathsf{enc}$ is the length-prefixed, domain-separated byte encoding in
the Android source. Strings and byte arrays are preceded by their 32-bit
big-endian length; the index and IEEE-754 score bits use 64-bit big-endian
words; the record domain is \texttt{accretionlink-audit-v2}. A checkpoint is
$C=(n,t_C,r_{n-1},\sigma)$ with
\begin{equation}
  \sigma\leftarrow
  \mathsf{Sign}_{sk^\star}(\mathsf{enc}_{C}(n,t_C,r_{n-1})).
  \label{eq:checkpoint}
\end{equation}
The checkpoint domain is
\texttt{accretionlink-audit-v2-checkpoint}.
Here $T$ denotes the projection onto the canonical fields of $R_i$.
Additional JSON members lie outside the authenticated transcript and must not
be interpreted as signed claims.

The explicit verifier $V(pk^\star,T,C,S)$ accepts only if indices in transcript
$T$ are exactly $0,\ldots,n-1$, the first predecessor is $0^{256}$, every later
predecessor equals the preceding recomputed hash, every digest is 32 bytes,
$C$ carries the same $n$ and terminal root, and the ECDSA signature verifies
under $pk^\star$. If prior state $S=(n_0,r_0)$ has been fixed outside the
device, $V$ additionally requires $n\geq n_0$ and that the prefix of $n_0$
records has root $r_0$. The tampering adversary obtains checkpoints on chosen
valid prefixes and wins by producing an accepted transcript that diverges from
the corresponding signed history, or rolls back past $S$.

Android \texttt{KeyInfo} locally reported the measured signing key as
StrongBox-backed\citep{androidkeyinfo}. The signed checkpoint authenticates the
record count, terminal root, and checkpoint time under the verifier-pinned
P-256 key. Its exported \texttt{key\_security\_level} label is not part of the
signed message, and the export contains no challenge-bound Android
key-attestation certificate chain. It therefore provides no authenticated
StrongBox provenance to a remote verifier\citep{androidkeyattestation}.
Appendix~\ref{app:transcript-proof} gives the integrity reduction and rollback
boundary.

The registered benchmark also supplies a sealed-label block audit. Text and
the complete posterior vector for profile $u$ are fixed before its label tuple
is revealed. Under the finite-population permutation null, label tuples are
assigned without replacement to the frozen profile sequences. Immediately
before block $u$, the predictable design prior for target $a$ is
\begin{equation}
\pi_{u}^{(a)}(y)=
\frac{\text{remaining labels of class }y}
     {\text{remaining profile blocks}}.
\label{eq:remainingprior}
\end{equation}
The canonical profile order, horizon mixture, and betting fractions are fixed
before test access. The resulting process tests label--text exchangeability
while respecting within-profile dependence.

\section{Selection-conditioned audit}
\label{sec:method}

\subsection{Selection channel and non-identifiability}

For one target, let $C\in\{0,1\}$ denote a predeclared observable cue and let
$\theta=\Pr(C=1)$ in the candidate collection. Suppose a cue-bearing item has
release weight $\lambda>0$ relative to a non-cue item. For a weighted draw, the
released cue probability is
\begin{equation}
r=\frac{\lambda\theta}{1-\theta+\lambda\theta},
\qquad
\operatorname{odds}(r)=\lambda\operatorname{odds}(\theta).
\label{eq:selection}
\end{equation}

\begin{proposition}[Partial identification under bounded selection]
If $0<L\leq\lambda\leq U$ and the selected-stream cue rate $r$ is known, then
the compatible source prevalence lies in
\begin{equation}
\theta\in
\left[
\frac{r}{U+(1-U)r},
\frac{r}{L+(1-L)r}
\right].
\label{eq:interval}
\end{equation}
Without a restriction on $\lambda$, $r\in(0,1)$ does not identify $\theta$.
\end{proposition}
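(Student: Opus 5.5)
The approach is to invert the odds relation in Equation~\eqref{eq:selection} and then use monotonicity in $\lambda$. Since $\operatorname{odds}(r)=\lambda\operatorname{odds}(\theta)$, for fixed known $r\in(0,1)$ every compatible pair satisfies $\operatorname{odds}(\theta)=\operatorname{odds}(r)/\lambda$. Solving for $\theta$ gives the explicit map
\begin{equation*}
\theta(\lambda)=\frac{r/\lambda}{(1-r)+r/\lambda}=\frac{r}{\lambda(1-r)+r}=\frac{r}{\lambda+(1-\lambda)r}.
\end{equation*}
First I will verify this algebra directly from Equation~\eqref{eq:selection}. Then I will note that the denominator $\lambda(1-r)+r$ is positive for $\lambda>0$, so $\theta(\lambda)\in(0,1)$ is well defined.

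Next I will show that $\theta(\lambda)$ is strictly decreasing in $\lambda$ on $(0,\infty)$ whenever $r\in(0,1)$. This is immediate because the denominator $r+\lambda(1-r)$ is strictly increasing in $\lambda$ and the numerator $r$ is a fixed positive constant. It is also continuous, so the image of the interval $[L,U]$ is exactly the closed interval $[\theta(U),\theta(L)]$, which is Equation~\eqref{eq:interval}.

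The set should be stated as the set of $\theta$ compatible with some admissible $\lambda$. That set is the image of $[L,U]$, giving both inclusion and sharpness: every point of the interval is attained by some $\lambda\in[L,U]$, via the intermediate value theorem or direct inversion $\lambda=\operatorname{odds}(r)/\operatorname{odds}(\theta)$. The boundary cases $r\in\{0,1\}$ are degenerate: the formula returns $\theta=r$ for every $\lambda$, and I will mention this as trivially consistent with the interval.

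For non-identifiability, I will drop the restriction on $\lambda$. As $\lambda$ ranges over $(0,\infty)$, $\theta(\lambda)$ tends to $1$ as $\lambda\to0^+$ and to $0$ as $\lambda\to\infty$. By continuity and monotonicity, the image is all of $(0,1)$. Equivalently, for any target $\theta^\ast\in(0,1)$, I will exhibit $\lambda^\ast=\operatorname{odds}(r)/\operatorname{odds}(\theta^\ast)>0$ reproducing the same $r$. Thus distinct prevalences produce the identical observed $r$, so $\theta$ is not identified.

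There is no real obstacle. The only care needed is to phrase ``compatible'' as the image set and to handle the endpoint cases $r\in\{0,1\}$ separately.
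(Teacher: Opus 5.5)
Your proposal is correct and takes the same approach as the paper. Both invert Equation~\eqref{eq:selection} to $\theta(\lambda)=r/[\lambda+(1-\lambda)r]$, use that it decreases in $\lambda$ for $r\in(0,1)$ so the endpoints come from $U$ and $L$, and show non-identifiability by exhibiting $\lambda=\operatorname{odds}(r)/\operatorname{odds}(\theta)$; your added sharpness argument via continuity and your $r\in\{0,1\}$ remark agree with the paper's appendix derivation.
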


\begin{proof}
Solving Equation~\eqref{eq:selection} gives
$\theta=r/[\lambda+(1-\lambda)r]$. This expression decreases in $\lambda$ for
$r\in(0,1)$, so its extrema occur at $U$ and $L$. For any
$\theta\in(0,1)$, choosing $\lambda=\operatorname{odds}(r)/
\operatorname{odds}(\theta)$ produces the same $r$; hence the unrestricted
model is not identified.
\end{proof}

The symmetric sensitivity specification used in the experiments is
$[L,U]=[1/U,U]$. Equation~\eqref{eq:selection} is a one-step sensitivity model,
not an assertion that finite feeds are sampled with replacement. In the
controlled benchmark, weighted exponential-race keys generate nested prefixes
without replacement. As the inventory changes, Equation~\eqref{eq:selection}
applies conditionally only when $\theta$ is interpreted for the current risk
set.

\subsection{What posterior NLL measures}

Fix a release policy $\lambda$ and write
$X_{1:t}^{\lambda}$ for its selected prefix.  Let $P_{\lambda}$ be the joint
law of $(Y,X_{1:t}^{\lambda})$, let
$\pi_{\lambda}(y)=P_{\lambda}(Y=y)$, and require the frozen adversary
$q_{\lambda,t}(\cdot\mid X_{1:t}^{\lambda})$ to be strictly positive.

\begin{proposition}[Attacker-achievable exposure identity]
With logarithms in base two,
\begin{align}
\mathcal{L}_{\lambda,t}
&:=\mathbb{E}_{P_{\lambda}}
  \log_2\frac{q_{\lambda,t}(Y\mid X_{1:t}^{\lambda})}
                   {\pi_{\lambda}(Y)} \nonumber\\
&=I_{2,P_{\lambda}}(Y;X_{1:t}^{\lambda})
-\mathbb{E}_{X_{1:t}^{\lambda}}
 D_{\mathrm{KL},2}\!\left(
 P_{\lambda}(Y\mid X_{1:t}^{\lambda})\,
 \middle\|\,
 q_{\lambda,t}(\cdot\mid X_{1:t}^{\lambda})
 \right)
\leq I_{2,P_{\lambda}}(Y;X_{1:t}^{\lambda}).
\label{eq:exposureidentity}
\end{align}
If a reference prior $\pi_0$ replaces the population marginal, the expected
log-score gain acquires the additive term
$D_{\mathrm{KL},2}(P_{\lambda}(Y)\|\pi_0)$ and need not be a mutual-
information lower bound.
\end{proposition}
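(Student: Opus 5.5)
The plan is to insert the true posterior inside the logarithm and split the expectation into two familiar terms. Write $p(y\mid x):=P_{\lambda}(Y=y\mid X_{1:t}^{\lambda}=x)$ for the Bayes posterior under the joint law $P_{\lambda}$. Because $q_{\lambda,t}$ is strictly positive, all logarithms below are finite wherever $p>0$. At any point with $\pi_{\lambda}(Y)>0$ and $p(Y\mid X)>0$ (a $P_{\lambda}$-almost-sure event), I will write
\[
\log_2\frac{q_{\lambda,t}(Y\mid X)}{\pi_{\lambda}(Y)}
=\log_2\frac{p(Y\mid X)}{\pi_{\lambda}(Y)}
-\log_2\frac{p(Y\mid X)}{q_{\lambda,t}(Y\mid X)} .
\]
Then I will take $\mathbb{E}_{P_{\lambda}}$ of both sides.

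For the first term, note that $p(y\mid x)/\pi_{\lambda}(y)$ is the density ratio $dP_{Y,X}/d(P_Y\otimes P_X)$. Its expected logarithm is therefore $I_{2,P_{\lambda}}(Y;X_{1:t}^{\lambda})$ by definition. This holds for general (non-discrete) $X$ because $Y$ takes finitely many values, so the conditional law of $Y$ given $X$ is always a density with respect to counting measure.

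For the second term, I will condition on $X$ using the tower property. The inner expectation over $Y\sim p(\cdot\mid X)$ is exactly $D_{\mathrm{KL},2}\bigl(p(\cdot\mid X)\,\|\,q_{\lambda,t}(\cdot\mid X)\bigr)$. This gives the stated identity. The inequality then follows from Gibbs' inequality, i.e. nonnegativity of KL.

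The main obstacle is integrability, since the splitting is only legitimate if the expectations are not of the form $\infty-\infty$. I expect to handle this as follows. With finite $\mathcal Y$, the KL term is bounded below by $0$, and positivity of $q$ keeps it well defined; it may be $+\infty$ only if $q$ is not bounded away from zero. If $I$ is finite, the identity holds in the extended reals. If $I=\infty$, I will state the convention that the left side is defined as the difference, or simply assume finite mutual information. For finite $\mathcal Y$, $I\le H(Y)<\infty$ automatically, so only the KL term can be infinite, and then both sides equal $-\infty$ consistently.

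For the reference-prior remark, I will repeat the decomposition with the extra factor $\pi_{\lambda}(Y)/\pi_0(Y)$. Its expectation under $P_{\lambda}$ depends only on $Y$ and equals $D_{\mathrm{KL},2}(P_{\lambda}(Y)\|\pi_0)$, which is positive whenever $\pi_0\neq\pi_{\lambda}$. The gain becomes $I-\mathbb{E}\,D_{\mathrm{KL}}+D_{\mathrm{KL}}(P_\lambda(Y)\|\pi_0)$, which can exceed $I$. A two-class example with an uninformative $X$, where $q=\pi_{\lambda}\neq\pi_0$, shows that the bound genuinely fails.
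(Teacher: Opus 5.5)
Your proposal is correct and is essentially the paper's proof: add and subtract $\log_2 P_{\lambda}(Y\mid X_{1:t}^{\lambda})$, identify the first expectation as mutual information and the second, after conditioning on $X_{1:t}^{\lambda}$, as the expected posterior KL regret, apply Gibbs' inequality, and pick up $\mathbb{E}\log_2[\pi_{\lambda}(Y)/\pi_0(Y)]=D_{\mathrm{KL},2}(P_{\lambda}(Y)\|\pi_0)$ when $\pi_0$ replaces $\pi_{\lambda}$. Two of your additions go beyond what the paper's three-line proof makes explicit and are welcome: the integrability bookkeeping (finite $\mathcal Y$ gives $I\le H(Y)<\infty$, so the identity holds in the extended reals even when the expected KL is infinite), and the uninformative two-class counterexample for the $\pi_0$ case.
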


\begin{proof}
Add and subtract
$\log_2 P_{\lambda}(Y\mid X_{1:t}^{\lambda})$ inside the expectation.
The first resulting expectation is mutual information, and the second is the
negative posterior KL regret. Replacing $\pi_{\lambda}$ by $\pi_0$ additionally adds
$\mathbb{E}\log_2[\pi_{\lambda}(Y)/\pi_0(Y)]$.
\end{proof}

Equation~\eqref{eq:exposureidentity} decomposes achievable log-score gain into
mutual information minus posterior regret, making NLL the operational
predictive quantity and exposing where calibration error enters. The paired
release regimes use the same profiles, labels, and Monte Carlo orders, so their
empirical label entropy cancels exactly:
\begin{equation}
\widehat{\mathrm{NLL}}_{\lambda=1,t}
-\widehat{\mathrm{NLL}}_{\lambda=4,t}
=\widehat{\mathcal L}_{\lambda=4,t}
-\widehat{\mathcal L}_{\lambda=1,t}.
\label{eq:nllcontrast}
\end{equation}
The contrast therefore measures increased exposure for the fitted adversary.
Because posterior regret may differ between release regimes, it need not equal
the change in true mutual information. We separately report the training-prior
log-score gain and the held-out empirical-entropy plug-in version.

\subsection{Selection-adjusted anytime warning}

The operational null is heterogeneous. Index synthetic profiles by $u$, and
predeclared target/order streams within a profile by $k$.  Before cue
$C_{u,k,t}\in\{0,1\}$ is revealed, the audit supplies a predictable bound
$\tau_{u,k,t}$ on the cue prevalence in the remaining source risk set and a
predictable upper bound $U_{u,k,t}$ on the release odds multiplier.  Write
\begin{equation}
\phi(v,x)=\frac{vx}{1-x+vx},\qquad
p^*_{u,k,t}=\phi(U_{u,k,t},\tau_{u,k,t}).
\label{eq:p0}
\end{equation}
The composite conditional null is the collection of all data laws satisfying
\begin{equation}
H_0^{\mathrm{sel}}:\quad
p_{u,k,t}:=\Pr(C_{u,k,t}=1\mid\mathcal F_{u,k,t-1})
\leq p^*_{u,k,t}\quad\text{for every }(u,k,t).
\label{eq:conditionalnull}
\end{equation}
Here $\mathcal F_{u,k,t-1}$ contains the earlier profile blocks, the released
past of this stream, and all predictable audit choices.  Equation~\eqref{eq:selection}
and monotonicity of $\phi$ show that the source and multiplier bounds imply
Equation~\eqref{eq:conditionalnull}.  The bounds may vary by profile and risk
set; the null imposes no independence or stationarity.

For an odds-inflation alternative $\delta>1$, define the closed-form process
\begin{equation}
E_{u,k,t}(\delta)=\prod_{s=1}^{t}
\frac{\delta^{C_{u,k,s}}}
     {1+(\delta-1)p^*_{u,k,s}},
\qquad E_{u,k,0}(\delta)=1.
\label{eq:eprocess}
\end{equation}
At a constant bound $p_0$, this is the Bernoulli likelihood ratio with
$q=\phi(\delta,p_0)$:
$(q/p_0)^C\{(1-q)/(1-p_0)\}^{1-C}$. Thus a fixed grid of interpretable odds
departures yields the finite mixture
$E^{\mathrm{mix}}_{u,k,t}=\sum_jw_jE_{u,k,t}(\delta_j)$, where
$w_j\geq0$ and $\sum_jw_j=1$ are frozen before the stream is observed. Below,
$E_t(q)$ denotes this equivalent constant-bound parametrization.

\begin{theorem}[Heterogeneous selection evidence and dependent-profile blocks]
\label{thm:heterogeneous-selection}
Under $H_0^{\mathrm{sel}}$, every $E_{u,k,t}(\delta)$ and every fixed mixture
$E^{\mathrm{mix}}_{u,k,t}$ is a nonnegative supermartingale in its stream, so
\begin{equation}
\Pr_{H_0^{\mathrm{sel}}}\!\left(
\sup_{t\geq0}E^{\mathrm{mix}}_{u,k,t}\geq\alpha^{-1}
\right)\leq\alpha.
\label{eq:ville}
\end{equation}
Let $e_{u,k}$ be a terminal stream e-value and let weights $a_{u,k}\geq0$,
$\sum_ka_{u,k}=1$, be chosen before the current profile's cues are observed.
Then $B_u=\sum_ka_{u,k}e_{u,k}$ remains an e-value conditional on earlier
profiles even when the streams within profile $u$ are arbitrarily dependent.
For any earlier-profile-measurable $\eta_u\in[0,1]$,
\[
M_n^{\mathrm{sel}}=\prod_{u=1}^{n}\{1-\eta_u+\eta_uB_u\}
\]
is a nonnegative supermartingale and obeys
$\Pr(\sup_nM_n^{\mathrm{sel}}\geq1/\alpha)\leq\alpha$.
\end{theorem}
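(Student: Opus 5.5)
The proof has three layers: the one-step supermartingale property of each stream factor, Ville's inequality for the stream and its mixture, and then the block average followed by the profile-level betting product.

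\emph{Stream level.} Fix $(u,k)$ and $\delta>1$. The factor at time $s$ is $f_s=\delta^{C_{u,k,s}}/\{1+(\delta-1)p^*_{u,k,s}\}$. Its denominator is $\mathcal F_{u,k,s-1}$-measurable, because $p^*_{u,k,s}$ is built from the predictable bounds $\tau_{u,k,s}$ and $U_{u,k,s}$. So
\[
\mathbb E[f_s\mid\mathcal F_{u,k,s-1}]=\frac{1+(\delta-1)p_{u,k,s}}{1+(\delta-1)p^*_{u,k,s}}\le 1,
\]
using $\delta-1>0$ and $p_{u,k,s}\le p^*_{u,k,s}$ from Equation~\eqref{eq:conditionalnull}. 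The same algebra with the inequality reversed covers $\delta<1$ only if the null were two-sided, so I restrict to $\delta>1$ as stated. Nonnegativity is immediate. Hence $E_{u,k,t}(\delta)$ is a nonnegative supermartingale with initial value $1$. A fixed convex mixture of supermartingales on the same filtration is again a supermartingale, since the weights $w_j$ are frozen before the stream. Ville's inequality (or optional stopping at the first crossing of $\alpha^{-1}$, then letting the horizon grow) then gives Equation~\eqref{eq:ville}. No independence enters anywhere: the argument uses only the conditional bound, so it holds uniformly over the composite heterogeneous null.

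\emph{Block level.} Let $\mathcal G_{u-1}$ be the $\sigma$-field of earlier profiles together with the predictable choices for profile $u$. I need to check that each terminal stream value satisfies $\mathbb E[e_{u,k}\mid\mathcal G_{u-1}]\le 1$. A terminal value means $E_{u,k,\tau}$ at a stopping time $\tau$, or at the finite end of the stream. It is the supermartingale at that time, started from $1$ with the filtration initialized at $\mathcal F_{u,k,0}\supseteq\mathcal G_{u-1}$. Optional stopping for nonnegative supermartingales (Fatou if $\tau$ is unbounded) gives the required bound.

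This step needs care. The stream filtration must contain the earlier profiles, and the filtrations for different $k$ must all extend the common $\mathcal G_{u-1}$. The paper defines $\mathcal F_{u,k,t-1}$ to include earlier blocks, so this holds. Since the $a_{u,k}$ are $\mathcal G_{u-1}$-measurable and nonnegative, linearity of conditional expectation gives
\[
\mathbb E[B_u\mid\mathcal G_{u-1}]=\sum_k a_{u,k}\,\mathbb E[e_{u,k}\mid\mathcal G_{u-1}]\le 1.
\]
This uses only the marginal bounds, so arbitrary within-profile dependence is allowed. This is the averaging principle of \citet{vovk2021evalues}.

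\emph{Profile level.} Because $\eta_u\in[0,1]$ is earlier-profile-measurable, the factor $1-\eta_u+\eta_uB_u$ is nonnegative and
\[
\mathbb E\bigl[1-\eta_u+\eta_uB_u\mid\mathcal G_{u-1}\bigr]=1-\eta_u+\eta_u\,\mathbb E[B_u\mid\mathcal G_{u-1}]\le 1.
\]
To multiply these factors I need the outer filtration $\mathcal H_u\supseteq\sigma(B_1,\dots,B_u)$ to be nested with $\mathcal G_u\supseteq\mathcal H_u$. Then $M_n^{\mathrm{sel}}$ is a nonnegative supermartingale with $M_0^{\mathrm{sel}}=1$, and Ville's inequality again yields $\Pr(\sup_n M_n^{\mathrm{sel}}\ge 1/\alpha)\le\alpha$.

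I expect the main obstacle to be the bookkeeping of the nested filtrations $\mathcal F_{u,k,t}$, $\mathcal G_u$ and $\mathcal H_u$. In particular, terminal e-values must be taken at stopping times of stream filtrations that all start from the common earlier-profile information. Otherwise the conditional validity of $B_u$ could fail.
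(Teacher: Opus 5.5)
Your proposal is correct and follows the paper's own proof essentially step for step. The shared steps are the one-step ratio $\{1+(\delta-1)p_{u,k,t}\}/\{1+(\delta-1)p^*_{u,k,t}\}\le 1$, conditional linearity for the mixture and for $B_u$ given $\mathcal G_{u-1}$, the bound $\mathbb E[1-\eta_u+\eta_uB_u\mid\mathcal G_{u-1}]\le 1$, iterated conditioning, and Ville's inequality. The only addition is your optional-stopping and tower argument, with $\mathcal F_{u,k,0}\supseteq\mathcal G_{u-1}$, which justifies $\mathbb E[e_{u,k}\mid\mathcal G_{u-1}]\le 1$; the paper simply invokes this as ``terminal validity''.
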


Appendix~\ref{app:selection-proofs} proves the result through one-step
conditional supermartingale arguments and arithmetic e-value closure
\citep{ville1939etude,vovk2021evalues,ramdas2023savi}.
It also permits a predictable betting alternative
\[
E^{\mathrm{bet}}_{u,k,t}
=\prod_{s=1}^{t}\{1+\nu_{u,k,s}(C_{u,k,s}-p^*_{u,k,s})\},
\qquad 0\leq\nu_{u,k,s}\leq1/p^*_{u,k,s},
\]
where each stake is chosen before $C_{u,k,s}$ is observed
\citep{howard2020timeuniform,waudbysmith2024betting}. Streams that reuse posts
or orders are mixed within a profile, never multiplied; only profile blocks
advance $M_n^{\mathrm{sel}}$.

\begin{remark}[What the guarantee requires]
The theorem is invalid if the actual conditional selection multiplier exceeds
$U$, if $\tau$ fails for a remaining risk set, or if the cue is defined after
examining the same sequence. Exact repeats generally violate a naive
conditional model. The implementation therefore deduplicates exact text before
applying the warning and treats near-duplicate clustering as a required
sensitivity analysis. $E_t$ is evidence against a cue-rate null; it is not
$q_t(Y)$ and cannot be read as a persona probability.
\end{remark}

\begin{proposition}[Inflation when the selection bound is understated]
\label{prop:misspecified-u}
Suppose the audit uses $\widetilde p_i=\phi(\widetilde U_i,\tau_i)$ but the true
null only guarantees $p_i\leq p_i^+=\phi(U_i^+,\tau_i)$, with
$U_i^+\geq\widetilde U_i$. For mixture component $j$, set
\[
\kappa_{i,j}=\frac{1+(\delta_j-1)p_i^+}
                    {1+(\delta_j-1)\widetilde p_i},\qquad
K_T^{\max}=\prod_{i=1}^{T}\max_j\kappa_{i,j}.
\]
Then the warning computed from the understated bounds satisfies the finite-
horizon envelope
\begin{equation}
\Pr\!\left(\sup_{t\leq T}E_t^{\mathrm{mix}}\geq\alpha^{-1}\right)
\leq\min\{1,\alpha K_T^{\max}\}.
\label{eq:misspecinflation}
\end{equation}
For constant assumed and true boundary rates $a<b$, the worst-case i.i.d.
log-growth of component $\delta$ is
\begin{equation}
b\log\delta-\log\{1+(\delta-1)a\}
=D_{\mathrm{KL}}(b\|a)-D_{\mathrm{KL}}(b\|\phi(\delta,a)).
\label{eq:misspecgrowth}
\end{equation}
It is maximized at $\delta=\operatorname{odds}(b)/\operatorname{odds}(a)$,
where the drift is $D_{\mathrm{KL}}(b\|a)>0$.
\end{proposition}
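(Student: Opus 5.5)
The plan is a change of measure: bound the misspecified process above by a correctly specified supermartingale times a deterministic inflation factor. Fix component $j$ and write $\widetilde E_t(\delta_j)=\prod_{s\le t}\delta_j^{C_s}/\{1+(\delta_j-1)\widetilde p_s\}$ for the process the audit actually computes. Write $E^+_t(\delta_j)$ for the same product with $\widetilde p_s$ replaced by $p_s^+$. The two processes are related by the exact identity
\[
\widetilde E_t(\delta_j)=E^+_t(\delta_j)\prod_{s\le t}\kappa_{s,j}.
\]
Under the true null $p_s\le p_s^+$, Theorem~\ref{thm:heterogeneous-selection} applies to the bounds $p^+$, so each $E^+_t(\delta_j)$ is a nonnegative supermartingale. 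This holds because $\mathbb E[\delta^{C}\mid\mathcal F]=1+(\delta-1)p\le 1+(\delta-1)p^+$ for $\delta>1$. For $\delta_j<1$ the inequality reverses, so I will restrict to the paper's odds-inflation grid $\delta_j>1$; there every $\kappa_{i,j}\ge1$, since $p^+\ge\widetilde p$ by monotonicity of $\phi$ in its first argument.

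Next I take the envelope over components. Since $\kappa_{s,j}\ge0$ and the partial products are nondecreasing in $t$, for $t\le T$ I get $\prod_{s\le t}\kappa_{s,j}\le\prod_{s\le T}\max_j\kappa_{s,j}=K_T^{\max}$. Summing over components with the frozen weights gives
\[
\sup_{t\le T}\widetilde E^{\rm mix}_t\le K_T^{\max}\sup_{t\le T}E^{+,\rm mix}_t .
\]
I then apply Ville's inequality to $E^{+,\rm mix}$ at level $\alpha K_T^{\max}$:
\[
\Pr\bigl(\sup_{t}E^{+,\rm mix}_t\ge (\alpha K_T^{\max})^{-1}\bigr)\le \alpha K_T^{\max}.
\]
Together with the trivial bound $1$, this yields \eqref{eq:misspecinflation}.

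The main obstacle is that $K_T^{\max}$ must be deterministic, or at least fixed before the stream, for the Ville step to be legitimate. If $U_i^+,\widetilde U_i,\tau_i$ are predictable rather than deterministic, $K_T^{\max}$ is random. My first attempt in that case is to assume the bounds are declared in advance, which matches the finite-horizon framing. Failing that, I would state the envelope with $K_T^{\max}$ replaced by an almost-sure upper bound $\bar K_T$, since the event inclusion only needs $K_T^{\max}\le\bar K_T$.

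For \eqref{eq:misspecgrowth} I plan a direct computation. With i.i.d.\ $C\sim\mathrm{Bern}(b)$, which is the worst case allowed at true boundary $b$, the log-growth is $\mathbb E\log\{\delta^{C}/(1+(\delta-1)a)\}=b\log\delta-\log\{1+(\delta-1)a\}$. Set $q=\phi(\delta,a)$ and use $q/a=\delta/(1+(\delta-1)a)$ and $(1-q)/(1-a)=1/(1+(\delta-1)a)$. Then
\[
D(b\|a)-D(b\|q)=b\log(q/a)+(1-b)\log\{(1-q)/(1-a)\},
\]
which is exactly the stated expression. Since $D(b\|q)\ge0$ with equality iff $q=b$, the maximum is attained at $\phi(\delta,a)=b$, i.e.\ $\delta=\operatorname{odds}(b)/\operatorname{odds}(a)$, with drift $D(b\|a)>0$ because $a<b$.
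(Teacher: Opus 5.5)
Your proof is correct and reaches the same envelope by a somewhat more direct route than the paper's.

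The paper works with the misspecified components $\widetilde E_{t,j}$. From the one-step bound $\mathbb E[\widetilde f_{i,j}\mid\mathcal F_{i-1}]\le\kappa_{i,j}$ it shows that $S_{t,j}=\widetilde E_{t,j}/K_{t,j}$ is a supermartingale. Because the components have different inflation paths, it then introduces the predictable, nonincreasing ratios $R_{t,j}=K_{t,j}/K_t^{\max}$. It proves that $\widetilde E_t^{\mathrm{mix}}/K_t^{\max}=\sum_j w_jR_{t,j}S_{t,j}$ is itself a nonnegative supermartingale, and only then applies Ville at level $1/(\alpha K_T^{\max})$.

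You instead observe that $S_{t,j}$ is exactly the correctly specified process $E_t^+(\delta_j)$ built from $p^+$. Its validity is therefore Theorem~\ref{thm:heterogeneous-selection} applied verbatim. You replace the $R_{t,j}$ step with the pathwise domination $\widetilde E_t^{\mathrm{mix}}\le K_T^{\max}E_t^{+,\mathrm{mix}}$ for $t\le T$, so Ville is applied to a mixture already known to be valid.

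Your route is shorter. It needs no new supermartingale lemma and makes the interpretation explicit: the understated warning is at most $K_T^{\max}$ times the warning an auditor with the true ceilings would compute. The paper's route yields a self-contained supermartingale for the rescaled misspecified mixture. That supermartingale is pointwise no larger than $E_t^{+,\mathrm{mix}}$ because $R_{t,j}\le1$, but the final bound is identical.

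Your caveat that $K_T^{\max}$ must be fixed in advance, or replaced by an almost-sure bound $\bar K_T$, applies equally to the paper's proof. The paper invokes Ville at level $1/(\alpha K_T^{\max})$ without comment, and with merely predictable ceilings that level is random. Your KL computation for \eqref{eq:misspecgrowth} and its maximizer matches the paper's argument.
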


Thus underestimating $U$ can create exponential, rather than merely additive,
false-warning inflation. Equation~\eqref{eq:misspecinflation} is a sensitivity
envelope, not a recovered level-$\alpha$ guarantee; the primary guarantee still
requires the declared conditional bound to hold.

\begin{proposition}[Finite-horizon crossing diagnostic]
\label{prop:finite-crossing}
For the diagnostic model $C_i\overset{\mathrm{iid}}{\sim}
\operatorname{Bernoulli}(r)$, a constant $p_0<r$, and fixed $q>p_0$, let
\[
A=\log(q/p_0),\quad B=\log\{(1-q)/(1-p_0)\},\quad D=A-B,
\]
and $h_T=\left\lceil\{\log(1/\alpha)-TB\}/D\right\rceil$.  With
$\tau_\alpha=\inf\{t:E_t(q)\geq1/\alpha\}$,
\begin{equation}
\Pr_r(\tau_\alpha\leq T)
\geq \Pr_r\{\operatorname{Bin}(T,r)\geq h_T\}
=\sum_{s=h_T}^{T}\binom{T}{s}r^s(1-r)^{T-s},
\label{eq:finitepower}
\end{equation}
with the usual zero/one boundary conventions. If
$x_T=\{\log(1/\alpha)/T-B\}/D<r$, Hoeffding's inequality
\citep{hoeffding1963probability} makes this at least
$1-\exp\{-2T(r-x_T)^2\}$.
Moreover,
\begin{equation}
\frac{1}{t}\log E_t(q)\longrightarrow
D_{\mathrm{KL}}(r\|p_0)-D_{\mathrm{KL}}(r\|q)
\quad\text{almost surely}.
\label{eq:egrowth}
\end{equation}
Positive drift therefore implies eventual crossing almost surely; at $q=r$,
$\log(1/\alpha)/D_{\mathrm{KL}}(r\|p_0)$ remains the first-order scale.
For a finite mixture, replacing $\alpha$ by $\alpha w_j$ in $h_T$ gives a
valid lower bound from any component $j$ because
$E_t^{\mathrm{mix}}\geq w_jE_t(q_j)$.
\end{proposition}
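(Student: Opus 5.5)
The plan is to reduce everything to the counting structure of the constant-bound likelihood ratio. With $p_0<q$, write $S_t=\sum_{i\le t}C_i$, so that
\[
\log E_t(q)=S_tA+(t-S_t)B=tB+S_tD .
\]
Here $D=A-B>0$, because $q>p_0$ gives $A>0>B$. Hence $E_t(q)\ge 1/\alpha$ is equivalent to
\[
S_t\ge \frac{\log(1/\alpha)-tB}{D}.
\]
Since $S_t$ is an integer, at time $T$ this is equivalent to $S_T\ge h_T$. The event $\{S_T\ge h_T\}$ is contained in $\{\tau_\alpha\le T\}$: on it, the process is at or above $1/\alpha$ at time $T$ itself. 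Under the i.i.d.\ diagnostic model $S_T\sim\operatorname{Bin}(T,r)$, which gives \eqref{eq:finitepower}. The zero/one conventions cover $h_T\le 0$ (probability one) and $h_T>T$ (the sum is empty).

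For the Hoeffding step, I will note that $h_T/T$ is, up to the ceiling, $x_T$. More precisely, $S_T\ge h_T$ holds whenever $S_T/T\ge x_T$, because $S_T$ is an integer and $S_T\ge Tx_T$ implies $S_T\ge\lceil Tx_T\rceil=h_T$. Then, when $x_T<r$, I apply
\[
\Pr(S_T/T< x_T)=\Pr\{S_T/T-r< -(r-x_T)\}\le \exp\{-2T(r-x_T)^2\},
\]
using Hoeffding's inequality for $[0,1]$-valued variables.

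For \eqref{eq:egrowth}, the strong law gives $S_t/t\to r$ almost surely. Therefore
\[
\tfrac1t\log E_t(q)\to rA+(1-r)B=r\log\frac{q}{p_0}+(1-r)\log\frac{1-q}{1-p_0}.
\]
The only computation is expanding
\[
D_{\mathrm{KL}}(r\|p_0)-D_{\mathrm{KL}}(r\|q)=\Bigl[r\log\tfrac r{p_0}+(1-r)\log\tfrac{1-r}{1-p_0}\Bigr]-\Bigl[r\log\tfrac rq+(1-r)\log\tfrac{1-r}{1-q}\Bigr].
\]
In this difference the $r\log r$ and $(1-r)\log(1-r)$ terms cancel, leaving the limit above.

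Positive drift gives $\log E_t\to+\infty$ almost surely, so $\tau_\alpha<\infty$ almost surely. At $q=r$ the drift equals $D_{\mathrm{KL}}(r\|p_0)$, and dividing the threshold $\log(1/\alpha)$ by this drift gives the first-order scale. This is a heuristic statement, and I will present it as such rather than as a theorem.

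For the mixture, nonnegativity gives $E_t^{\mathrm{mix}}\ge w_jE_t(q_j)$. Hence $\{E_t(q_j)\ge 1/(\alpha w_j)\}$ is contained in $\{E_t^{\mathrm{mix}}\ge 1/\alpha\}$. Rerunning the first step with $\alpha w_j$ in place of $\alpha$ gives the claimed bound.

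I expect no real obstacle. The only delicate points are the following:
\begin{itemize}
\item The sign $D>0$ must be verified so that the inequality direction is preserved.
\item The integer rounding must be handled correctly, in particular the direction of the ceiling in the Hoeffding step.
\item The lower bound must be stated as a bound from the single time $T$ rather than an equality, since crossing may occur earlier.
\end{itemize}
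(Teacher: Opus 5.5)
Your proposal is correct and follows the paper's proof essentially step for step. Both use the exact identity $\log E_t(q)=tB+DS_t$ with $D>0$, integer rounding to reduce a terminal crossing to $S_T\ge h_T$, Hoeffding on the complementary event $S_T/T<x_T$, the strong law for the drift, and the pointwise bound $E_t^{\mathrm{mix}}\ge w_jE_t(q_j)$ for the mixture extension (you also make the ceiling step and the heuristic status of the first-order-scale remark slightly more explicit than the paper does).
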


Appendix~\ref{app:selection-proofs} proves both propositions. These stationary
calculations are prospective diagnostics only. The anytime theorem itself
allows arbitrary predictable dependence satisfying Equation~\eqref{eq:conditionalnull}.

\subsection{The G5 posterior as block evidence}

The cue process above audits the release mechanism. A second, complementary
construction asks whether the G5 adversary extracts attribute information at
all. It uses the same posterior that defines NLL, rather than an unrelated
confidence heuristic. Index independent synthetic-profile audit blocks by
$u$. Let $Y_u\in\mathcal{Y}_a$ be the sealed target label, let $X_{u,1:h}$ be
the selected prefix, and let $\pi_u(y)>0$ be the frozen design prior. Before
unsealing $Y_u$, the Pixel computes a normalized vector
$q_{u,h}(\cdot\mid X_{u,1:h})$ for each registered horizon.

\begin{theorem}[Posterior block e-process]
Suppose the operational no-leakage null is
\begin{equation}
H_0^{\mathrm{attr}}:\quad
\Pr(Y_u=y\mid\mathcal F_{u-1},X_{u,1:H},C_u)=\pi_u(y),
\label{eq:attributenull}
\end{equation}
where $C_u$ contains public design context and the current label remains sealed
while the posterior and all horizon weights are chosen. Then, for every fixed
$h$,
\begin{equation}
e_{u,h}=\frac{q_{u,h}(Y_u\mid X_{u,1:h})}{\pi_u(Y_u)}
\label{eq:posteriorevalue}
\end{equation}
has conditional mean one under $H_0^{\mathrm{attr}}$. For predictable weights
$w_{u,h}\geq0$ summing to one,
$\bar e_u=\sum_h w_{u,h}e_{u,h}$ is also an e-value. If
$\eta_u\in[0,1]$ is chosen from past blocks only, then
\begin{equation}
M_n=\prod_{u=1}^{n}\{1-\eta_u+\eta_u\bar e_u\}
\label{eq:blockeprocess}
\end{equation}
is a nonnegative martingale under the null and obeys
$\Pr(\sup_nM_n\geq1/\alpha)\leq\alpha$.
\end{theorem}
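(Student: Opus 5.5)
The plan is to argue one block at a time: compute the conditional expectation of each block factor given the past, and then pass to Ville's inequality, exactly as in Theorem~\ref{thm:heterogeneous-selection}. Let $\mathcal G_u=\sigma(\mathcal F_{u-1},X_{u,1:H},C_u)$ be the $\sigma$-field available just before the label $Y_u$ is unsealed. By construction, $q_{u,h}(\cdot\mid X_{u,1:h})$, $\pi_u$, the weights $w_{u,h}$, and $\eta_u$ are all $\mathcal G_u$-measurable. The posterior and weights are fixed while the label is sealed; $\pi_u$ is the predictable design prior of Equation~\eqref{eq:remainingprior} and is strictly positive.

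The first step is the single-horizon identity. Under $H_0^{\mathrm{attr}}$, $Y_u$ given $\mathcal G_u$ has law $\pi_u$. Since $\pi_u(y)>0$, we can write
\[
\mathbb E[e_{u,h}\mid\mathcal G_u]=\sum_{y\in\mathcal Y_a}\pi_u(y)\frac{q_{u,h}(y\mid X_{u,1:h})}{\pi_u(y)}=\sum_{y}q_{u,h}(y\mid X_{u,1:h})=1 ,
\]
which uses only that $q_{u,h}$ is a normalized probability vector. Nonnegativity is immediate. Because $\mathcal F_{u-1}\subseteq\mathcal G_u$, the tower property then gives conditional mean one given $\mathcal F_{u-1}$.

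The second step handles the horizon mixture. Linearity of conditional expectation, together with the $\mathcal G_u$-measurability of $w_{u,h}\ge0$ and $\sum_hw_{u,h}=1$, gives $\mathbb E[\bar e_u\mid\mathcal G_u]=\sum_h w_{u,h}=1$. This holds however strongly the $e_{u,h}$ depend on one another across horizons, since they share the same posts; it is the same arithmetic-averaging closure invoked in Theorem~\ref{thm:heterogeneous-selection}.

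The third step establishes the martingale. Set $F_u=1-\eta_u+\eta_u\bar e_u$. With $\eta_u\in[0,1]$ and $\bar e_u\ge0$, each factor is nonnegative. Since $\eta_u$ is $\mathcal F_{u-1}$-measurable, $\mathbb E[F_u\mid\mathcal F_{u-1}]=1-\eta_u+\eta_u\cdot1=1$. Here $\mathcal F_u$ contains $Y_u$ and the block-$u$ data, so $M_{u-1}$ is $\mathcal F_{u-1}$-measurable and $\mathbb E[M_u\mid\mathcal F_{u-1}]=M_{u-1}$. Thus $M_n$ is a nonnegative martingale with $M_0=1$, and Ville's inequality \citep{ville1939etude} yields $\Pr(\sup_nM_n\ge1/\alpha)\le\alpha$. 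The same conclusion follows from optional stopping applied at the first crossing time.

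The main obstacle is not the algebra; it is checking the measurability bookkeeping. The null must condition on exactly the information used to form $q_{u,h}$ and the weights: all horizons $X_{u,1:H}$ and the design context $C_u$, but not $Y_u$. The filtration must also be nested, with $\mathcal F_{u-1}\subseteq\mathcal G_u\subseteq\mathcal F_u$. I would state this explicitly so that the sealed-label protocol is what licenses the first step. I would also remark that equality, rather than an inequality, in $H_0^{\mathrm{attr}}$ is what gives a martingale rather than merely a supermartingale.
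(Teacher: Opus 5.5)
Your proposal is correct and follows the paper's proof essentially step for step. You condition on the pre-unsealing $\sigma$-field, get conditional mean one from the normalization of $q_{u,h}$, and handle the horizon mixture by linearity regardless of dependence across prefixes. You then show the betting factor $1-\eta_u+\eta_u\bar e_u$ has conditional mean one and finish with iterated conditioning and Ville's inequality; your explicit nesting $\mathcal F_{u-1}\subseteq\mathcal G_u\subseteq\mathcal F_u$ spells out bookkeeping the paper leaves implicit.
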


\begin{proof}
Condition on the pre-unsealing sigma-field. Normalization gives
$\mathbb E_0[e_{u,h}\mid\cdot]
=\sum_y\pi_u(y)q_{u,h}(y)/\pi_u(y)=1$. Linearity proves the horizon mixture
claim without any independence among prefixes. The predictable betting factor
in Equation~\eqref{eq:blockeprocess} also has conditional mean one; iterated
conditioning and Ville's inequality complete the proof.
\end{proof}

Validity requires $q$, temperatures, horizon weights, and bets to be frozen
before the current label is unsealed. Calibration affects the log-score
interpretation but not e-validity. Repeated horizons sharing one sealed label
are arithmetically merged within a profile block rather than multiplied. Only
profile blocks advance $M_n$. A label-blind stopping rule based on entropy,
margin, or a resource
budget can replace a fixed horizon by treating the stopped prefix as the
block observation; stopping after viewing the true-class score is forbidden.

Equations~\eqref{eq:exposureidentity} and \eqref{eq:posteriorevalue} expose the
method--systems link. The Tensor-G5 posterior yields both attacker-achievable
leakage bits, $\log_2 e_{u,h}$, and anytime-valid block evidence. The CPU does
only the 18-logit normalization, horizon merge, and scalar betting recurrence;
semantic evidence is produced by the once-only G5 encoder stream.

\subsection{A sequential model that determines the execution graph}

The audit is defined on released prefixes, but an implementation need not run
an encoder on every prefix. Let $x_t$ be the normalized text of the newly
admitted post and let
\begin{equation}
e_t=f_{\phi}(x_t)\in\mathbb{R}^{d},\qquad
s_t=s_{t-1}+e_t,\qquad
z_t=\frac{s_t}{\max\{\lVert s_t\rVert_2,\varepsilon\}},
\label{eq:streamstate}
\end{equation}
with $s_0=0$. The encoder parameters $\phi$ are frozen before the held-out split
is opened. Each admitted post is embedded at most once and bound to its content
digest; later prefixes update only $s_t$. Thus $T$ distinct posts require
exactly $T$ encoder invocations and $O(Td)$ state-update work, rather than
reconstructing and re-encoding the $T$ overlapping prefixes. Exact repeats are
deduplicated for the warning as required above and cannot silently create an
extra encoder invocation.

The registered primary candidate uses the classification prompt and the
released SentencePiece vocabulary to produce a padded
\texttt{int32[1,256]} token vector. The frozen encoder is the 300M-parameter,
sequence-length-256 EmbeddingGemma deployment variant, which returns an
$\ell_2$-normalized \texttt{float32[1,768]} representation
\citep{embeddinggemma2025,embeddinggemmacard}. Its mixed-precision contract
uses per-channel INT4 quantization for embeddings, feed-forward layers, and
projections and INT8 quantization for attention. On Tensor G5 the registered
FlatBuffer is an ahead-of-time compiled artifact whose model graph consists of
a single custom \texttt{DISPATCH\_OP}; Section~\ref{sec:device} states the
evidence and the limits of that fact.

For each attribute $a$, a small transparent head computes
\begin{equation}
\ell_t^{(a)}=W_a z_t+b_a,
\qquad
q_t^{(a)}=\operatorname{softmax}\!\left(\ell_t^{(a)}/T_a\right).
\label{eq:heads}
\end{equation}
The four heads contain 18 logits in total. They are fitted on training profiles
with equal profile weight and class balancing; a deterministic internal split
of training profiles controls fitting choices. Calibration profiles are
reserved for the four temperatures $T_a$ and for the predeclared candidate
gate. No encoder or head is selected using held-out text, embeddings, labels,
predictions, or metrics. Posterior evaluation, Equation~\eqref{eq:eprocess},
Equation~\eqref{eq:streamstate}, and integrity bookkeeping are small host
operations and do not change the e-process guarantee.

The host reference already implements Equation~\eqref{eq:eprocess} as a
log-likelihood-ratio scalar recurrence with $O(1)$ work and state per observed
cue; its online updates are tested against the corresponding batch recurrence.
The registered host benchmark comprises the 768-dimensional running-sum
update, the $768\times18$ linear head, four softmax operations, and four
e-process updates. Constant prefix cost in this benchmark is necessary but not
sufficient to call the host work off-path: the device trace must also show that
submission of post $t+1$ does not wait for posterior or e-process work at
post $t$.

\paragraph{Hashed baseline and candidate gate.}
Signed hashed word unigrams, word bigrams, and character 3--5-grams in 4,096
dimensions followed by a two-layer 256/96 ReLU MLP provide the compact
baseline\citep{weinberger2009hashing}. Its generic TFLite graph contains
built-in operators but no Tensor G5 \texttt{DISPATCH\_OP}; its measured
execution is therefore a CPU baseline. Before test access, the semantic
candidate passed the predeclared calibration comparison against this baseline
and was frozen as the predictor. The selection estimands and e-processes are
unchanged by that model choice.

\section{Experimental protocol}
\label{sec:protocol}

\subsection{Data and partitioning}

SynthPAI contains 7,823 English comments from 300 fully synthetic profiles
\citep{yukhymenko2024synthpai}. We use the pinned dataset revision and verify
the acquired JSONL file by SHA-256. Raw data are not included in the paper
source or ancillary archive. Only comment text reaches the predictor. Profile
records supply four supervised targets; username, generated writing style,
author identifier, model guesses, full review content, and thread structure
are prohibited inputs.

Profiles, rather than comments or order replicates, are the independent units.
Selection experiments first retain the 261 profiles having at least eight
distinct comments. A deterministic multi-task stratification then assigns the
eligible profiles to 157 training, 52 calibration, and 52 test profiles. The
test split is opened once. The eligibility decision and split identifiers are
frozen before model fitting; no test text, feature, label distribution,
prediction, or metric is used to select the model.

The classes are: income \{low, middle, high\}, where the source labels
\emph{high} and \emph{very high} are merged; relationship
\{single, in a relationship, engaged, married, divorced, widowed\}; education
\{below bachelor's, bachelor's, graduate or professional, doctorate\}; and age
\{18--24, 25--34, 35--44, 45--54, 55 or older\}. These class definitions
follow the synthetic benchmark taxonomy.

\subsection{Release regimes}

For each target and eligible profile, the same deterministic random keys
define nested prefixes of lengths $1,2,4,8$. Seven order replicates are used on
the calibration split; the held-out evaluation uses 100 paired order
replicates. The regimes are:

\begin{enumerate}[leftmargin=*]
    \item \textbf{Uniform}: sampling without replacement, implemented as the
    exact $\lambda=1$ control for every profile and replicate.
    \item \textbf{Controlled odds intervention}: a target-specific binary
    stratum derived before fitting from parse-valid benchmark review metadata
    sets the exponential-race release rate to $\lambda\in\{2,4\}$; all other
    comments receive rate one. Neither the stratum nor review content enters
    the attribute model.
    \item \textbf{Confidence-greedy}: at each step, the selector chooses the
    remaining authentic comment that maximizes the current model's largest
    class probability after appending it. The selector receives neither the
    true label nor human-review metadata.
    \item \textbf{Exact-repeat diagnostic}: repeated copies of one comment are
    compared with exact-text deduplication. This is a negative control for
    dependence and is excluded from the primary attack estimand.
\end{enumerate}

The controlled stratum supplies a known release-odds axis;
confidence-greedy selection supplies the model-aware attack.

\subsection{Estimands and uncertainty}

The pre-specified primary estimand for each target is the profile-paired change
in negative log likelihood at prefix eight between $\lambda=4$ and
$\lambda=1$. Positive reported amplification means that selected release lowers
the adversary's log loss. Secondary metrics are posterior gain over the
training-profile prior, macro F1, balanced accuracy, multi-class Brier score,
and 10-bin expected calibration error. Baselines include the training-profile
prior, hard modal-frequency imputation with finite probability smoothing, and
the single-comment prefix.

For profile $u$, target $a$, and release regime $\lambda$, the 100 order
replicates are first averaged inside the profile. Writing
\begin{equation}
d_u^{(a)}=\frac{1}{R}\sum_{r=1}^{R}
\left[-\log q^{(a)}_{1,u,r}(Y_u)
      +\log q^{(a)}_{4,u,r}(Y_u)\right],
\qquad
\widehat\Delta_a=\frac{1}{N}\sum_{u=1}^{N}d_u^{(a)},
\label{eq:profileestimand}
\end{equation}
the inferential sample size is $N=52$, not $NR$. The Monte Carlo orders
reduce integration error for each $d_u^{(a)}$; they do not create new people
or independent observations. Equation~\eqref{eq:profileestimand} also makes
the label-entropy cancellation in Equation~\eqref{eq:nllcontrast} exact on the
paired empirical population.

Uncertainty intervals use 10,000 bootstrap replicates that resample complete
synthetic profiles and retain all paired orderings for a sampled profile
\citep{efron1993bootstrap}. The four target-wise primary tests are adjusted by
Holm's step-down procedure\citep{holm1979simple}. Order replicates increase
Monte Carlo stability but never increase the reported sample size. Exact
seeds, configurations, split digests, model hashes, and result JSON are part of
the ancillary record.

The mean-effect test uses profile-level paired differences only. Its
Rademacher sign-flip calibration is exact only under the stated
sign-exchangeability null for $d_u^{(a)}$; we therefore report the effect and
profile-bootstrap interval as primary evidence and the adjusted sign-flip
$p$-value as a supporting analysis, rather than calling it design-exact.
Before opening the test split, calibration profiles determine a nuisance-only
power curve: observed profile-level dispersion is held fixed, synthetic mean
shifts are added, and the complete bootstrap/sign-flip/Holm pipeline is
repeated. We report the 80\% detectable shift and power over a fixed grid. No
observed calibration effect is substituted for a held-out result.

For the anytime warning, matched uniform-release sequences estimate the
empirical familywise crossing rate. Fixed departures $r-p_0$ supply power and
median crossing-time diagnostics, which are compared with the KL growth in
Equation~\eqref{eq:egrowth}. Whole profiles, not posts, are resampled. A
misspecified-bound sweep $U_{\mathrm{assumed}}/U_{\mathrm{true}}$ exposes the
point at which nominal false-warning control is lost and is interpreted against
the finite-horizon envelope in Equation~\eqref{eq:misspecinflation}; it does not
estimate the platform's $U$.

\subsection{Analysis freeze}

Before test access, the experiment froze the data digest, split manifest,
feature contract, Tensor G5 FlatBuffer, linear heads, calibration temperatures,
selection policies, estimands, bootstrap seeds, and result generator. The
candidate manifest binds 29 files under SHA-256 digest
\texttt{1201b464a1c8bb92}. The held-out command opened the test split once after
that freeze and emitted a content-addressed result record. It processed 1,622
distinct test posts belonging to 52 profiles; candidate choice and calibration
were not revisited after the record was opened.

\subsection{PAN15 transfer protocol}

PAN15 was specified after completion of the primary synthetic analysis and is
therefore treated as exploratory. We evaluated the English portion of PAN15
Author Profiling\citep{pan15,rangel2015pan}.
The official archive contains Twitter text from 152 training profiles and 142
test profiles. Global exact-text deduplication removed 123 training and 207
test posts, leaving 14,043 and 12,971 posts respectively. The sole target is
the benchmark-provided four-way age group. The task documentation specifies
the classes but not the collection or label provenance; the analysis treats
them as benchmark labels rather than self-reports.

The external head is fitted only on the official training profiles. A seeded,
profile-stratified internal split reserves 122 profiles for fitting and 30 for
temperature calibration. Pixel 10 tokenizes and encodes each retained post once
with the same Tensor G5 artifact. The official test embeddings and labels are
opened only after the fitted head, temperature, source digests, seeds, and
evaluation program have been frozen.

This check replaces the synthetic review-derived cue with a label-blind,
model-derived score. For a single-post age posterior $q_i$, its salience is
$s_i=1-H(q_i)/\log 4$ and its release weight is $4^{s_i}\in[1,4]$.
Paired exponential-race orders compare this policy with uniform release at
prefixes $1,2,4,8$. It is a bounded selection simulation, not a reconstruction
of Twitter's ranking system. We report a profile-cluster bootstrap, a
profile-level sign-flip analysis, a permuted-label negative control, and the
same high-confidence reversal count. It is not pooled with the primary
synthetic hypothesis tests.

Because this salience is computed by the scored target model, we subsequently
specified a cross-model sensitivity control after the original PAN15 result
was known. The control is post-hoc and not preregistered. A multinomial
logistic selector uses word and character TF--IDF features fitted on the 122
official training profiles and temperature-scaled on the 30 calibration
profiles. Its single-post entropy defines the same odds-four release weights,
but the scored outcome remains the unchanged G5 embeddings and frozen age
head. Test labels never enter the selector. Hyperparameters, source digests,
and the fitted selector digest were fixed before the control test scores were
computed.

A matched negative control permutes the TF--IDF weights among posts within
each profile and paired draw. This preserves the exact weight distribution
while breaking the link between a post and its selection weight. The lexical
selector and G5 target have distinct representations and parameters, but they
share the age task and official training corpus. The control therefore tests
whether literal target-model reuse is necessary; it does not establish task
independence or identify a live platform policy.

\section{Held-out security evaluation}
\label{sec:results}

The frozen candidate was evaluated once on 52 previously sealed synthetic
profiles. Each profile contributed 100 paired release orders; these orders
integrate over the release mechanism and are not treated as additional people.
All intervals resample complete profiles and all four primary tests use the
pre-specified Holm correction.

\subsection{Confidentiality advantage}

Exposure control reduced the fitted adversary's aggregate NLL at every
registered horizon (Figure~\ref{fig:selection-exposure}). The paired
advantages for prefixes of one, two, four, and eight posts were respectively
0.03029 nats (95\% CI [0.01980, 0.04102]), 0.02800 [0.01760, 0.03889],
0.02337 [0.01397, 0.03287], and 0.01595 [0.00890, 0.02336]. The effect is
largest when the adversary sees little evidence and remains positive after
eight posts.

\begin{figure}[H]
\centering
\includegraphics[width=0.90\linewidth]{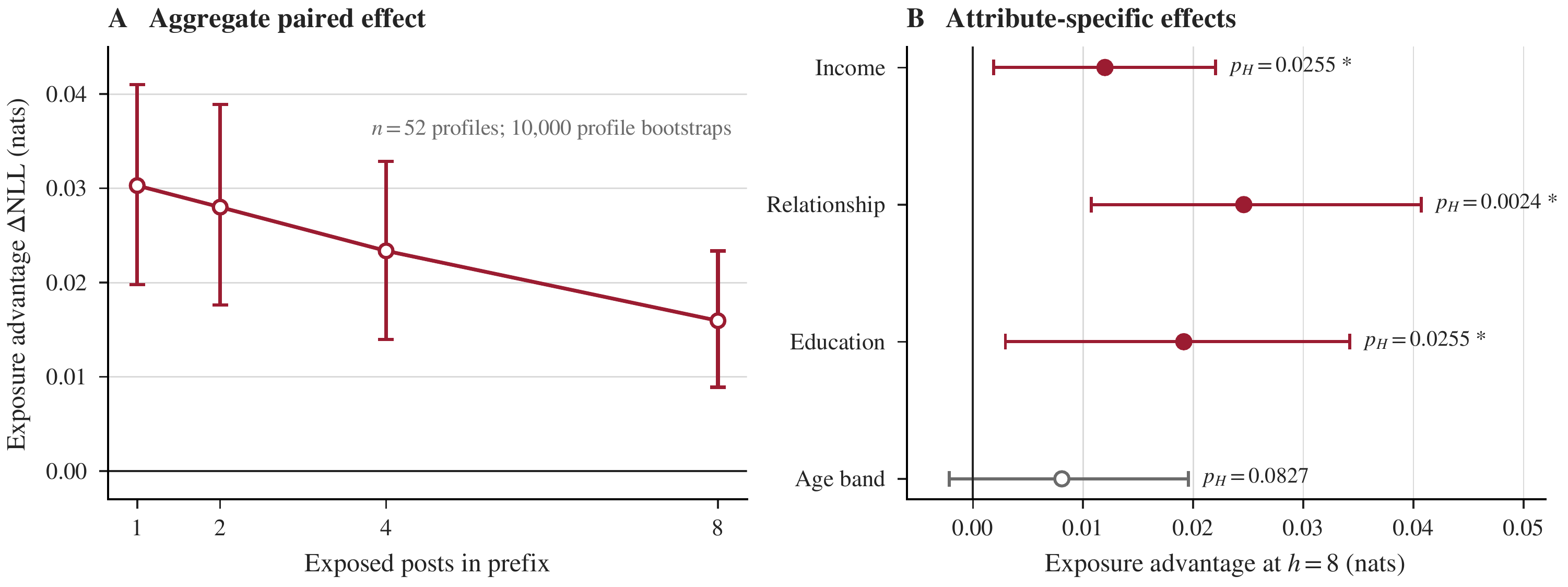}
\caption{Held-out confidentiality advantage under exposure control. Panel A
averages the four attributes at each prefix. Panel B gives the registered
attribute-specific effects at eight posts; filled markers denote Holm-adjusted
$p<0.05$. Error bars are 95\% profile-bootstrap intervals. The 100 paired
orders reduce Monte Carlo error within each of the 52 profiles and do not
increase the inferential sample size.}
\label{fig:selection-exposure}
\end{figure}

\noindent\begin{minipage}{\linewidth}
At the primary horizon of eight posts, income NLL changed from 0.7324 under
uniform release to 0.7204 under $\lambda=4$. Its paired advantage was 0.01199
nats (95\% CI [0.00188, 0.02204], Holm $p=0.0255$). Relationship NLL changed
from 1.6078 to 1.5832, an advantage of 0.02459 [0.01075, 0.04070] (Holm
$p=0.0024$). Education NLL changed from 0.9821 to 0.9630, an advantage of
0.01915 [0.00295, 0.03423] (Holm $p=0.0255$). Age-band NLL changed from 1.2169
to 1.2088, an advantage of 0.00808 whose interval included zero
([-0.00214, 0.01957], Holm $p=0.0827$). Exposure amplification therefore
survived familywise correction for three targets, with no corresponding
conclusion for age band.
\end{minipage}

\subsection{Integrity attack and sequential warning}

The model-guided selector generated high-confidence false profiles without
editing a post. It succeeded for 5 of 36 income cases eligible under the
paired uniform baseline and for 1 of 33 education cases. It succeeded in none
of 18 relationship or 22 age-band cases. The resulting 6 wins among 109
target-specific eligible cases are a descriptive security count because one
synthetic profile can enter more than one target denominator.

Under matched uniform release, the cue-rate e-process crossed its
$E_t\geq20$ boundary in 5 of 20,800 profile-order streams, a rate of 0.024\%.
The target counts were two for income, two for relationship, one for education,
and zero for age band. These streams comprise 100 Monte Carlo orders nested in
each profile; the observed rate is a negative-control diagnostic rather than a
binomial coverage estimate. Time-uniform type-I control follows from the
conditional supermartingale in Section~\ref{sec:method}.

\subsection{PAN15 transfer evaluation}

The PAN15 check was evaluated separately from the primary synthetic family. It
used 142 official English Twitter test profiles and 12,971 posts
after global exact-text deduplication. A Tensor G5 encoder processed each post
once with fallback disabled; the test extraction had a 51.085 ms median NPU
invocation time and a 2.198 ms median tokenizer time. The device remained in
the light thermal state throughout this long extraction, so these timings are
execution evidence rather than a cross-device performance comparison.

The label-blind salience policy improved the frozen age head's NLL at every
registered prefix. At one, two, four, and eight posts, the profile-averaged
uniform-minus-selected contrasts were 0.00868 [0.00553, 0.01209], 0.01029
[0.00686, 0.01382], 0.01195 [0.00865, 0.01544], and 0.01227 [0.00905,
0.01564] nats, respectively. The four exploratory sign-flip analyses each had
Holm-adjusted $p=0.00040$. At the primary external prefix, uniform NLL was
1.12209 and selected NLL was 1.10994. A 2,000-replicate permuted-label control
had mean 0.00029 and one-sided $p=0.00050$, whereas the observed contrast was
0.01227.

The post-hoc cross-model control addresses whether the original selector merely
favored inputs on which the scored model was already confident. The TF--IDF
selector and G5 target used distinct representations and fitted parameters.
Against the unchanged G5 target, its uniform-minus-selected contrasts were
positive at all four horizons; at eight posts the advantage was 0.01470 nats
(95\% CI [0.01188, 0.01764], Holm-adjusted $p=0.00040$). Permuting the same
selection weights among posts within each profile and draw did not reproduce a
positive advantage at eight posts ($-0.00088$ nats; Holm-adjusted $p=0.0676$).
The paired selector-minus-shuffle difference was 0.01558 nats [0.01279,
0.01856] (Holm-adjusted $p=0.00040$), and a 10,000-replicate profile-label
permutation gave one-sided $p=0.00010$. Post-level selector and target
saliences remained correlated (Spearman $\rho=0.255$), as expected for models
trained on the same age task.

These PAN15 results provide external support for the proper-score exposure
effect and show that literal selector--target reuse is not required for the
observed contrast. They do not support the integrity event: neither selector
caused a high-confidence false reversal among 71 eligible profiles. Because
the cross-model control was specified after the original PAN15 effect was
known, it is a sensitivity analysis rather than confirmatory evidence. PAN15
contains no observed exposure log and does not document how its 2015 benchmark
age groups were obtained; both policies characterize declared bounded
selection, not Twitter's ranking system.

\section{Pixel 10 security operation}
\label{sec:device}

The Android implementation links local semantic evidence extraction to a
registered exposure-control replay through content-addressed evidence records.
It requests no Internet permission. Raw text is consumed by the tokenizer and
encoder but is absent from the exported result and security replay.

\subsection{Tensor G5 execution contract}

The registered encoder is the sequence-length-256 mixed-precision
EmbeddingGemma-300M artifact compiled ahead of time for Tensor G5
\citep{tensorsdk,litert}. Its SHA-256 digest begins
\texttt{\ModelSha}. FlatBuffer inspection finds one subgraph containing one
custom \texttt{DISPATCH\_OP}. LiteRT 2.1.6 opens the model with
\texttt{Accelerator.NPU}, fallback disabled, and reports the compiled model as
fully accelerated with zero fallback nodes. The exercised tensor contract is
\texttt{int32[1,256]} to \texttt{float32[1,768]}. A generic mixed-precision
CPU reference and the G5 artifact produced cosine similarity 0.998933 on the
same token vector, with mean absolute error 0.001331 and maximum absolute error
0.005717.

The sealed held-out extraction began only after the candidate freeze. Android
SentencePiece tokenized 1,622 distinct posts from 52 test profiles and the G5
graph encoded the same 1,622 posts. Invocation counters establish one tokenizer
call and one NPU call per post, with no prefix-history re-encoding. Neither raw
text nor target labels occurs in the extracted metadata or embedding archive.
For every held-out horizon $H$, the extraction counters establish
\begin{equation}
N_{\mathrm{tokenize}}(H)=N_{\mathrm{enc}}(H)=H.
\label{eq:onceonly}
\end{equation}
The fixed-state recurrence was exercised separately on train/calibration
embeddings and verified one state update per admitted post. Horizons at one,
two, four, and eight posts therefore reuse accumulated embeddings rather than
invoking the semantic encoder again.

\subsection{Bottleneck localization}

Let $P$, $E$, and $C$ denote the per-post service times for SentencePiece,
Tensor G5 encoding, and the fixed state/head/e-process update. Once-only
encoding changes the steady-stream service interval from repeated prefix work
to
\begin{equation}
T_{\mathrm{service}}\simeq\max\{P,E,C\}
\label{eq:pipelinecriticalpath}
\end{equation}
when producer, accelerator, and consumer stages are connected by bounded
buffers. This decomposition makes host-stage decoupling a measurable security
systems property rather than an inference from the presence of an NPU label.

On the held-out raw-text stream, SentencePiece had median and 95th-percentile
times of \TokenizerMedianMs{} and \TokenizerPnnMs{} ms. The corresponding G5
invocation times were \NpuMedianMs{} and \NpuPnnMs{} ms. A separate ten-block
Pixel measurement over train/calibration embeddings placed the complete
768-dimensional state update, 18-logit heads, four softmax operations, and
four e-process recurrences at a median of \StateHeadMedianMs{} ms. Its median
ratio to the G5 stage was \StateHeadRatio{}, and its prefix-eight to prefix-one
95th-percentile ratio was 1.016. Thus the tokenizer is about 22 times faster
than the encoder and the posterior recurrence consumes below 0.1\% of the G5
service time. Under Equation~\eqref{eq:pipelinecriticalpath}, neither measured
host stage determines the stream rate; the dominant service boundary is the
Tensor G5 compiled model and its runtime.

The native \texttt{CompiledModel} test exercised two reusable tensor-buffer
sets through \texttt{RunAsync}\citep{litertcompiledmodel}. Both calls returned
\texttt{async=true} with unsignaled output events and valid fences, and the
second submission returned before the first event wait. The two-submit host
tail was 4.608 ms against a 51.176 ms synchronous single-call reference, a
ratio of 0.090. This establishes that host submission is decoupled from output
completion. The wall time through both fences was 0.936 times two synchronous
calls, above the registered 0.900 overlap gate. The trace establishes
asynchronous host release and localizes the dominant service stage. Because the
overlap gate was not met, overlap between G5 kernels remains unresolved.
These timing values come from the preregistered aggregate trace. Figure~\ref{fig:pixel-screen}
binds a distinct later live report, which produced the same no-fallback outcome
and likewise does not support a kernel-overlap claim.
The report is a local aggregate execution record rather than a freshness
attestation. It carries no verifier nonce or authenticated timestamp, and its
\texttt{heldout\_test\_accessed=false} field is interpreted inside the stated
application/OS trust boundary.

\subsection{Executable security game and integrity record}

The device operation consumes a registered audit bundle after verifying its
SHA-256 digest. The bundle contains hashed profile identifiers, profile-level
paired NLL effects, confidence-greedy integrity outcomes, and uniform-release
e-process records. It contains neither text nor embeddings. The Android engine
reaggregates target-wise profile means, prefix contrasts, and event counts and
rates from hashed profile or sequence records. It checks the supplied NLL
levels, confidence intervals, raw and adjusted $p$-values,
integrity-transition flags, and per-sequence warning flags for schema, range,
coverage, and stated internal relations. Labels, predictions, and per-step
e-values are absent, so those transitions and crossings are replay-validated
rather than regenerated. The plot is populated from this verified result,
not from registration metadata.

Each accepted operation appends a length-prefixed, domain-separated record.
For the executed operation, $d^{\rm in}$ is the digest of the selected replay
bytes, $d^{\rm model}$ is the digest of the actual on-device model bytes,
$d^{\rm exec}$ is the digest of the validated native G5 report bytes, and
$d^{\rm out}$ binds those three values to the replay-verification digest. The
engine rehashes the model and revalidates the native report immediately before
appending. This binding does not assert that replay aggregation ran on G5 or
that the replay effects were produced by that live invocation. The terminal
root and record count enter a
separately domain-separated ECDSA P-256 checkpoint signed by the Android
Keystore key. Section~\ref{sec:threat} defines the external verifier and
tampering game; Theorem~\ref{thm:transcript} reduces accepted divergence to
signature forgery, a SHA-256 collision, or violation of externally retained
rollback state. On the measured Pixel 10,
\texttt{KeyInfo.getSecurityLevel()} reported StrongBox. This is local
hardware-backed key evidence. Because the export contains no challenge-bound
attestation certificate chain and no off-device chain validation, it is not
remote device attestation. An earlier checkpoint retained independently is
required to detect prefix rollback.

Figure~\ref{fig:pixel-screen} records the completed operation after bundle
selection, digest verification, replay aggregation, runtime-evidence binding,
and chain append. It exposes the selected file, bundle digest, profile count,
operation identifier, verified outcomes, and two-record chain state. A
subsequent displayed-state commit produced the separately exported
three-record signed checkpoint.

\IfDeviceResults{
\IfFileExists{figures/pixel10_experiment.png}{
\begin{figure}[H]
\centering
\includegraphics[width=0.47\linewidth]{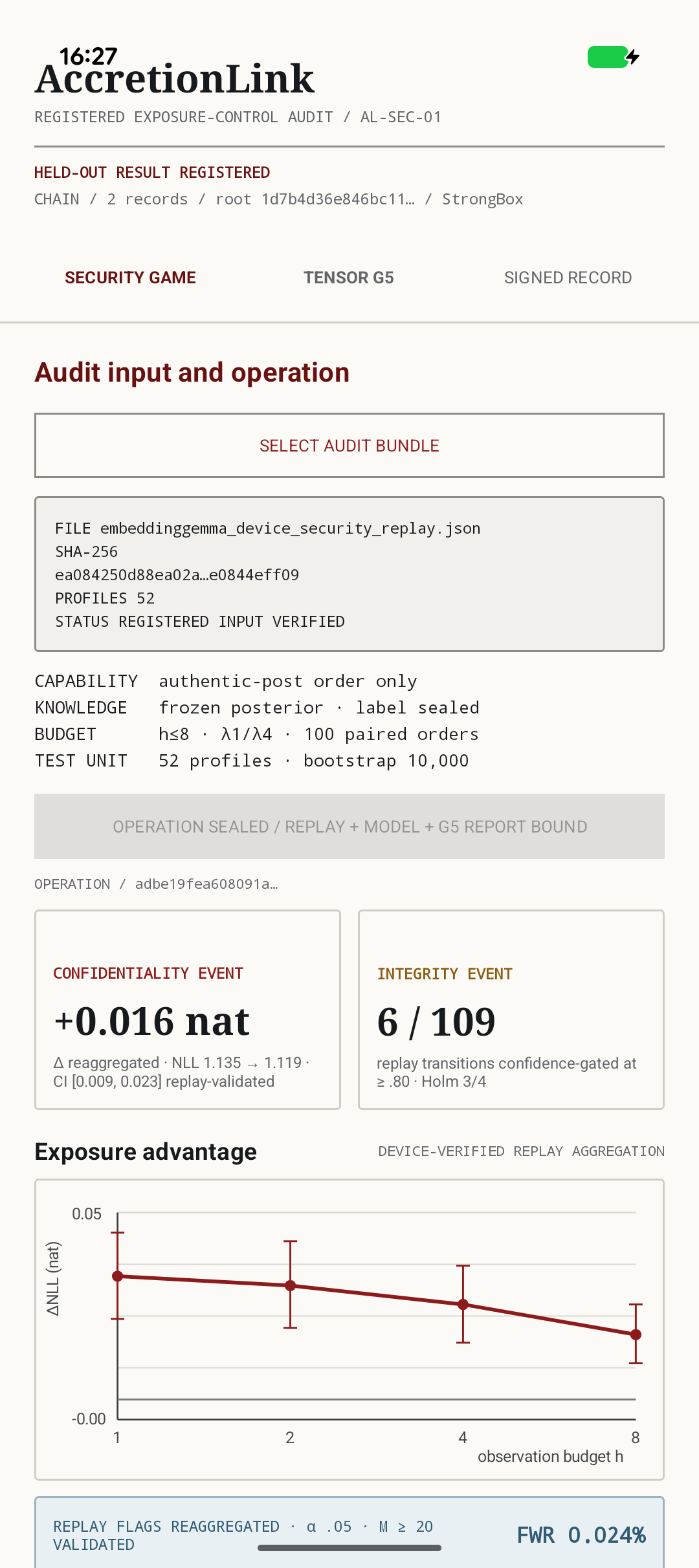}
\caption{Registered exposure-control operation on Pixel 10. The selected
bundle passed SHA-256 verification; the device reaggregated its profile effects
and event counts, validated replay-only statistics, and bound the result to the
actual model and native G5-report digests. The header shows the two-record chain
state immediately after execution. The visible operation prefix
\texttt{adbe19fe} and root prefix \texttt{1d7b4d36} match record 1 and the
authenticated prefix of the accompanying three-record checkpoint.}
\label{fig:pixel-screen}
\end{figure}
}{}
}{}

\section{Scope and limitations}
\label{sec:limitations}

\paragraph{Study populations.}
SynthPAI consists of designed, fully synthetic profiles
\citep{yukhymenko2024synthpai}. Its English text, attribute taxonomy, and
generator-specific regularities limit transfer to human writing. The
exploratory PAN15 check adds real 2015 Twitter profiles, but it has 142
official test profiles, no platform-ranking log, and benchmark-supplied age
groups whose provenance we did not independently verify. It does not justify a
claim about current Instagram users or any individual. Raw tweets, profile
identifiers, labels, embeddings, and device sidecars are excluded from the
release.

\paragraph{Selection model.}
The sensitivity bound $U$ is an input to the audit rather than an identified
property of a ranking service. Partial-identification intervals inherit that
assumption. The e-process guarantee requires the conditional cue-rate bound at
each step. If $U$ is understated, Proposition~\ref{prop:misspecified-u} gives a
finite-horizon inflation envelope, but it does not restore nominal false-warning
control; the envelope can grow exponentially with the audit horizon. Exact-text
deduplication removes literal repetitions but does not remove paraphrases or
correlated topic bursts. Posterior temperature scaling addresses marginal
calibration on the calibration split, not model misspecification.

\paragraph{Dual use.}
The confidence-greedy selector is an attribute-inference attack and can inform
both red-team audits and offensive ranking. The released experiments export no
raw third-party text and report aggregate security outcomes. These restrictions
reduce direct reuse against a named individual but do not eliminate misuse of
the selection principle.

\paragraph{Device boundary.}
Local execution removes network transmission by the audit package, while the
operating system, keyboard, display-capture path, backups, and signing-key
state remain part of the trusted device boundary. The StrongBox signature
authenticates a chain prefix under its local key. It does not attest model
correctness or reveal an omitted prefix unless an earlier checkpoint has been
fixed outside the device. The native G5 report has no remote freshness
challenge, and its execution-policy fields rely on the trusted local
measurement path. Tensor G5 measurements locate the dominant measured service
stage; they do not identify the limiting arithmetic or memory resource inside
the accelerator.

\section{Conclusion}

Exposure order is a security control surface even when every item is
authentic. In the held-out experiment, bounded selection improved the fitted
attribute adversary's log score at every prefix, three of four primary effects
survived familywise correction, and a label-blind selector created six
high-confidence false profiles. Partial identification and e-processes place
those effects inside an explicit probability model rather than treating model
confidence as security evidence.

On PAN15, the proper-score effect had the same sign at every horizon, while the
integrity attack yielded no reversal among 71 eligible profiles. A post-hoc
control retained the score advantage when a separately fitted TF--IDF model
selected inputs for the frozen G5 target, and matched identity-shuffled weights
did not reproduce the positive effect. Thus literal selector--target reuse is
not necessary for the observed PAN15 contrast. The two models still share an
age task and training corpus, and the bounded selection experiment neither
reconstructs a live recommender nor characterizes a current social-media
population.

The Pixel 10 realization uses once-only semantic extraction as the security
game's accelerator workload. Each new post crosses one fallback-free Tensor G5
AOT boundary, while a fixed state advances the posterior and sequential evidence. On the registered stream, G5
service time exceeded SentencePiece by about 22-fold and the posterior update
by more than three orders of magnitude, moving the measured bottleneck away
from the CPU host. The app then reaggregates replay effects and event counts
and commits the selected-replay, actual-model, validated-native-report, and
operation digests to a hash chain. On the measured handset, \texttt{KeyInfo}
locally reported the P-256 signing key as StrongBox-backed. This combination
turns exposure-amplified inference from a ranking intuition into a reproducible
on-device security operation.

\bibliographystyle{unsrtnat}
\bibliography{references}

\appendix
\section{Reproducibility contract}

The source compiles with \verb|pdflatex|, \verb|bibtex|, and two final
\verb|pdflatex| passes. The ancillary record contains acquisition instructions,
checksums, deterministic split identifiers, experiment configurations, model
contracts, source code, tests, and machine-readable aggregate results. Cached
SynthPAI text and private signing keys are excluded.

Each numeric result is traceable to a JSON record that binds the dataset,
configuration, split manifest, source tree, and model under SHA-256. The
held-out record also binds the pretest freeze digest and one-use test manifest.
Device records include the build fingerprint, runtime version, accelerator
logs, warm-up policy, timed samples, and thermal observations. The security
replay stores hashed profile identifiers and profile-level numerical effects
without text, labels, or embeddings.

\section{Transcript verifier and reduction}
\label{app:transcript-proof}

This appendix states the integrity guarantee of
Equations~\eqref{eq:recordchain}--\eqref{eq:checkpoint}. It does not assign
cryptographic meaning to the exposure score or establish that the model,
operating system, or displayed result is correct.

\paragraph{Verifier.}
Parse the export using the fixed field order and length-prefixed encodings
implemented by \texttt{AuditTranscript}. Each record contains input, model,
execution-evidence, and output digests in that order. Reject a malformed field,
non-finite score, nonconsecutive index, digest of length other than 32 bytes,
or wrong format domain. Starting with $\hat r_{-1}=0^{256}$, require the predecessor in
record $i$ to equal $\hat r_{i-1}$ and compute
$\hat r_i=H(\mathsf{enc}(R_i,\hat r_{i-1}))$. Require the checkpoint count to
equal $|T|$, its root to equal $\hat r_{|T|-1}$ (or $0^{256}$ for an empty
transcript), and
\[
\mathsf{Verify}_{pk^\star}
  (\mathsf{enc}_{C}(|T|,t_C,\hat r_{|T|-1}),\sigma)=1.
\]
Here $pk^\star$ is supplied by the relying party; accepting a public key merely
because it appears in the same untrusted JSON would authenticate no prior
identity. For rollback protection, the verifier also receives an externally
retained state $S=(n_0,r_0)$ from an earlier accepted checkpoint and verifies
that $T[0:n_0]$ recomputes to $r_0$.

The application's \texttt{AuditTranscript.verifyExport} checks internal
consistency using the SPKI carried in the JSON. It neither pins $pk^\star$ nor
accepts rollback state $S$. The ancillary external verifier pins the measured
SPKI fingerprint and reconstructs the same canonical bytes; rollback checking
still requires a previously retained state. Theorem~\ref{thm:transcript}
applies to $V(pk^\star,T,C,S)$, not to an unpinned self-consistency check.

\begin{theorem}[Accepted divergence]
\label{thm:transcript}
Assume the signature scheme is existentially unforgeable under adaptive
chosen-message attack and $H$ is collision resistant. Let the verifier pin
$pk^\star$ before the adversarial export and faithfully retain $S$. If an
adversary, after obtaining signatures on adaptively chosen valid checkpoint
messages, causes $V$ to accept a transcript that is not the signed history for
that checkpoint or that precedes $S$, then it produces either (i) a valid
signature on a message never submitted to the signing oracle, (ii) a collision
in $H$, or (iii) a violation of the assumed external rollback state.
\end{theorem}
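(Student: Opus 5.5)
The argument is a standard reduction on a single accepting event. Fix an adversary $\mathcal B$ and suppose $V(pk^\star,T,C,S)$ accepts. Write $C=(n,t_C,\hat r_{n-1},\sigma)$ and let $m^\ast=\mathsf{enc}_C(n,t_C,\hat r_{n-1})$ be the checkpoint message the verifier checks. The proof splits on whether $m^\ast$ was ever queried to the signing oracle.

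\emph{Case (i): $m^\ast$ was never queried.} Then $(m^\ast,\sigma)$ is a valid forgery under $pk^\star$. The forger simulates $\mathcal B$ by relaying its checkpoint requests to the EUF-CMA oracle and outputs $(m^\ast,\sigma)$. Because $pk^\star$ is pinned before the export, $\mathcal B$ cannot substitute a key it controls. This is exactly where the unpinned \texttt{verifyExport} check fails to meet the hypothesis.

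\emph{Case (ii): $m^\ast$ equals some queried message $m_j$.} Here $m_j=\mathsf{enc}_C(n_j,t_j,r^{(j)}_{n_j-1})$ was signed for an honest history $T_j=(R^{(j)}_0,\dots,R^{(j)}_{n_j-1})$. The checkpoint encoding is length-prefixed and domain-separated, hence injective. So $n=n_j$ and the recomputed root $\hat r_{n-1}$ equals $r^{(j)}_{n-1}$, while by assumption $T\neq T_j$ as sequences of canonical fields.

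From here I walk backwards from index $n-1$. Let $i$ be the largest index with $(R_i,\hat r_{i-1})\neq(R^{(j)}_i,r^{(j)}_{i-1})$. Such an $i$ exists because some record differs, and both chains start from the common predecessor $0^{256}$. By maximality, $\hat r_i=r^{(j)}_i$, taking $i=n-1$ from the root equality or otherwise from the predecessor equality at $i+1$, which the verifier enforces. The two inputs to $H$ at step $i$ therefore differ, since $\mathsf{enc}$ is injective, yet have equal hash. That is a collision in $H$.

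The main obstacle is this injectivity. The encoding must be unambiguous across records and across the record and checkpoint domains, so that distinct field tuples give distinct byte strings. I would argue it from the 32-bit length prefixes, the fixed-width 64-bit index and score words, the fixed 32-byte digests checked by $V$, and the distinct domain strings. A second subtlety is that the predecessor field is part of $R_i$, so agreement must include it. I would define the divergence index on the pair $(R_i,\hat r_{i-1})$ to handle this.

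\emph{Rollback clause.} $V$ requires $n\ge n_0$ and that $T[0{:}n_0]$ recomputes to $r_0$. If $S$ comes from a genuinely accepted earlier checkpoint, the same backward-walk argument applied to the prefix shows any accepted transcript extends that signed prefix unless $H$ collides. So a transcript "preceding $S$" can be accepted only if $S$ was not faithfully retained, which is outcome (iii). The reduction loses nothing beyond a union over the two cases; the forger and the collision finder run in essentially $\mathcal B$'s time.
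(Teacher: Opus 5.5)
Your proof is correct and follows the paper's argument. It uses the same split on whether the checkpoint message was ever queried. It then uses injectivity of the length-prefixed encodings to force equal $n$ and terminal root, finds a hash-chain collision, and handles rollback through the externally retained prefix root. The only difference is cosmetic: you walk backward to the last index where $(R_i,\hat r_{i-1})$ differs, while the paper walks forward from the first divergent record to the first point where the two chains re-merge, and both locate a colliding pair in the same way.
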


\begin{proof}
Let $m=\mathsf{enc}_{C}(n,t_C,r)$ be the message of the accepted checkpoint.
If $m$ was never queried, its accepted signature is an EUF-CMA forgery in the
standard adaptive chosen-message experiment\citep{goldwasser1988signatures}.
Otherwise consider the valid history associated with the oracle query for
$m$. The count, timestamp, and root are inside the signed encoding, so an
accepted divergent transcript has the same $n$ and terminal root $r$.

Let $j$ be the first divergent record. If the two canonical byte strings at
$j$ differ but their hashes are equal, they form a collision in $H$. If their
hashes differ, acceptance forces the chains eventually to merge because their
signed terminal roots are equal. At the first merge, two distinct canonical
inputs have the same hash, again yielding a collision. Length prefixes, fixed
field order, and separate record/checkpoint domains rule out an encoding
ambiguity. Thus, absent a forgery or collision, the accepted transcript equals
the history whose checkpoint message was signed.

Finally, $V$ rejects $n<n_0$ and recomputes the externally retained root over
the first $n_0$ records. A rollback or replacement before that boundary can be
accepted only if the purported $S$ differs from the faithfully retained state,
which is case~(iii), or if the preceding collision/forgery argument applies.
\end{proof}

The theorem is conditional on these verification inputs. The current application keeps its
mutable transcript snapshot in app-private storage, holds records in memory
during a process lifetime, and exports a freshly signed checkpoint. It does
not implement a hardware monotonic counter or an off-device transparency log.
Consequently, deleting an unexported suffix or restoring both local file and
application state is outside the proved detection guarantee. An earlier
checkpoint must be retained by an independent relying party for $S$ to exist.
The local \texttt{KeyInfo} security-level query supports only an on-device
statement that the exercised key was hardware-backed. The exported label is
not covered by the checkpoint signature; Android remote key attestation
would additionally require a fresh challenge, the attestation certificate
chain, trusted-root and revocation checks, and off-device validation
\citep{androidkeyattestation}.

\section{Derivation for a general lower and upper selection bound}

If $\lambda\in[L,U]$ rather than the symmetric interval
$[1/U,U]$, inversion of Equation~\eqref{eq:selection} gives
\[
\theta(\lambda)=\frac{r}{\lambda+(1-\lambda)r}.
\]
Because
\[
\frac{\partial\theta}{\partial\lambda}
=-\frac{r(1-r)}{[\lambda+(1-\lambda)r]^2}\leq0,
\]
the lower endpoint uses $U$ and the upper endpoint uses $L$. At $r=0$ or
$r=1$, the compatible interval collapses at the corresponding boundary,
subject to finite positive $L$ and $U$.

\section{Proofs for the heterogeneous selection audit}
\label{app:selection-proofs}

This appendix gives the complete arguments behind
Theorem~\ref{thm:heterogeneous-selection} and
Propositions~\ref{prop:misspecified-u}--\ref{prop:finite-crossing}.  All
conditional claims are understood on the nondegenerate range
$0<p^*_{u,k,t}<1$; boundary cases follow directly or by limits.

\subsection{From risk-set bounds to the composite null}

For $v>0$ and $x\in(0,1)$, direct differentiation gives
\[
\frac{\partial\phi(v,x)}{\partial v}
=\frac{x(1-x)}{\{1-x+vx\}^2}\geq0,
\qquad
\frac{\partial\phi(v,x)}{\partial x}
=\frac{v}{\{1-x+vx\}^2}>0.
\]
Conditionally on $\mathcal F_{u,k,t-1}$, let
$\theta_{u,k,t}$ be the cue prevalence in the remaining source risk set and
$\lambda_{u,k,t}$ the release odds multiplier.  The one-step weighted-release
model gives
$p_{u,k,t}=\phi(\lambda_{u,k,t},\theta_{u,k,t})$. Hence
$\theta_{u,k,t}\leq\tau_{u,k,t}$ and
$\lambda_{u,k,t}\leq U_{u,k,t}$ imply
\[
p_{u,k,t}\leq
\phi(U_{u,k,t},\tau_{u,k,t})=p^*_{u,k,t}.
\]
Because the inequalities are imposed separately after every observed history,
they define a composite family of possibly nonstationary, dependent laws, not
a single Bernoulli model.

\subsection{Proof of Theorem~\ref{thm:heterogeneous-selection}}

Fix $u,k$, and $\delta>1$.  The next factor of
Equation~\eqref{eq:eprocess} has conditional expectation
\begin{align*}
\mathbb E\!\left[
\left.
\frac{\delta^{C_{u,k,t}}}{1+(\delta-1)p^*_{u,k,t}}
\right|\mathcal F_{u,k,t-1}\right]
&=\frac{1+(\delta-1)p_{u,k,t}}
        {1+(\delta-1)p^*_{u,k,t}}\\
&\leq1.
\end{align*}
The factor is nonnegative and the accumulated product starts at one, so
$E_{u,k,t}(\delta)$ is a nonnegative supermartingale.  For fixed nonnegative
weights summing to one, conditional linearity gives
\[
\mathbb E[E^{\mathrm{mix}}_{u,k,t}\mid\mathcal F_{u,k,t-1}]
\leq E^{\mathrm{mix}}_{u,k,t-1}.
\]
Ville's inequality therefore proves Equation~\eqref{eq:ville}.

Let $\mathcal G_{u-1}$ contain all completed profile blocks and the design
information used to choose $a_{u,k}$ and $\eta_u$. Terminal validity gives
$\mathbb E[e_{u,k}\mid\mathcal G_{u-1}]\leq1$ for every $k$. No joint
factorization is needed: even under arbitrary dependence among the streams,
\[
\mathbb E[B_u\mid\mathcal G_{u-1}]
=\sum_k a_{u,k}\mathbb E[e_{u,k}\mid\mathcal G_{u-1}]
\leq1.
\]
Consequently, with $Z_u=1-\eta_u+\eta_uB_u$,
$\mathbb E[Z_u\mid\mathcal G_{u-1}]\leq1$. Iterated conditioning shows that
$M_n^{\mathrm{sel}}=\prod_{u=1}^nZ_u$ is a nonnegative supermartingale;
another application of Ville's inequality gives the profile-level anytime
bound. This also shows why multiplying the dependent $e_{u,k}$ within a
profile would be unjustified.

Let $\nu_{u,k,t}$ be
$\mathcal F_{u,k,t-1}$-measurable and satisfy
$0\leq\nu_{u,k,t}\leq1/p^*_{u,k,t}$. The factor
$1+\nu_{u,k,t}(C_{u,k,t}-p^*_{u,k,t})$ is nonnegative for both cue outcomes,
and
\[
\mathbb E[1+\nu_{u,k,t}(C_{u,k,t}-p^*_{u,k,t})
\mid\mathcal F_{u,k,t-1}]
=1+\nu_{u,k,t}(p_{u,k,t}-p^*_{u,k,t})\leq1.
\]
Its product is therefore a valid predictable-betting supermartingale as
claimed.

\subsection{Proof of Proposition~\ref{prop:misspecified-u}}

Let $\widetilde E_{t,j}$ denote component $j$ computed with the assumed
ceilings $\widetilde p_i$. Its next factor $\widetilde f_{i,j}$ satisfies,
under the true bound $p_i\leq p_i^+$,
\[
\mathbb E[\widetilde f_{i,j}\mid\mathcal F_{i-1}]
=\frac{1+(\delta_j-1)p_i}{1+(\delta_j-1)\widetilde p_i}
\leq\kappa_{i,j}.
\]
Thus $S_{t,j}:=\widetilde E_{t,j}/K_{t,j}$ is a nonnegative
supermartingale, where $K_{t,j}=\prod_{i=1}^t\kappa_{i,j}$. Define
$K_t^{\max}=\prod_{i=1}^t\max_\ell\kappa_{i,\ell}$ and
$R_{t,j}=K_{t,j}/K_t^{\max}$. The predictable sequence $R_{t,j}$ is
nonincreasing because its next multiplicative factor is at most one. Hence
\[
\frac{\widetilde E_t^{\mathrm{mix}}}{K_t^{\max}}
=\sum_jw_jR_{t,j}S_{t,j}
\]
is a nonnegative supermartingale: conditioning each summand at time $t$ first
uses the supermartingale property of $S_{t,j}$ and then
$R_{t,j}\leq R_{t-1,j}$. Moreover, $p_i^+\geq\widetilde p_i$ implies
$\kappa_{i,j}\geq1$, so $K_t^{\max}\leq K_T^{\max}$ for $t\leq T$. Therefore
\begin{align*}
\left\{\sup_{t\leq T}\widetilde E_t^{\mathrm{mix}}\geq\alpha^{-1}\right\}
&\subseteq
\left\{\sup_{t\leq T}
\frac{\widetilde E_t^{\mathrm{mix}}}{K_t^{\max}}
\geq\frac{1}{\alpha K_T^{\max}}\right\}.
\end{align*}
Ville's inequality, capped by the trivial bound one, proves
Equation~\eqref{eq:misspecinflation}.

For the stationary boundary diagnostic, let the assumed rate be $a$, the
actual rate be $b>a$, and set $q=\phi(\delta,a)$. The expected one-step log
factor is
\[
g(\delta)=b\log\delta-\log\{1+(\delta-1)a\}.
\]
Since the factor is the Bernoulli likelihood ratio of $q$ against $a$,
adding and subtracting the Bernoulli log likelihood at $b$ gives
\[
g(\delta)=D_{\mathrm{KL}}(b\|a)-D_{\mathrm{KL}}(b\|q).
\]
The second term is minimized at $q=b$, equivalent to
$\delta=\operatorname{odds}(b)/\operatorname{odds}(a)$. This proves
Equation~\eqref{eq:misspecgrowth} and the stated maximum. In the constant case
the multiplicative envelope itself is
\[
K_T(\delta)=
\left\{\frac{1+(\delta-1)b}{1+(\delta-1)a}\right\}^{T},
\]
which makes the potential exponential sensitivity to an understated bound
explicit.

\subsection{Proof of Proposition~\ref{prop:finite-crossing}}

Let $S_T=\sum_{i=1}^TC_i$. For a constant null boundary and fixed alternative,
the terminal log evidence is exactly
\[
\log E_T(q)=S_TA+(T-S_T)B=TB+DS_T.
\]
Here $A>0$, $B<0$, and $D>0$. Therefore
$E_T(q)\geq1/\alpha$ if and only if $S_T\geq h_T$. A terminal crossing implies
a crossing by time $T$, while under the diagnostic model
$S_T\sim\operatorname{Bin}(T,r)$. This proves the binomial-tail lower bound in
Equation~\eqref{eq:finitepower}, including the zero/one conventions when
$h_T$ lies outside $\{0,\ldots,T\}$.

If $x_T<r$, the complement of the terminal-threshold event implies
$S_T/T<x_T$. Hoeffding's inequality for bounded independent variables gives
\[
\Pr_r(S_T/T<x_T)\leq\exp\{-2T(r-x_T)^2\},
\]
which proves the displayed analytic lower bound. Finally, the strong law of
large numbers gives $S_t/t\to r$ almost surely, so division of
$\log E_t(q)=tB+DS_t$ by $t$ yields
\[
B+Dr
=r\log\frac{q}{p_0}+(1-r)\log\frac{1-q}{1-p_0}
=D_{\mathrm{KL}}(r\|p_0)-D_{\mathrm{KL}}(r\|q).
\]
If this limit is positive, $\log E_t(q)$ diverges and the fixed threshold is
eventually crossed. For a mixture,
$E_t^{\mathrm{mix}}\geq w_jE_t(q_j)$ pointwise, proving the component-wise
extension.

\section{E-process implementation notes}

The implementation accumulates Equation~\eqref{eq:eprocess} in log space and
reports both $\log E_t$ and the threshold $\log(1/\alpha)$. A finite mixture
uses predeclared odds alternatives $\delta_j>1$ and weights $w_j\geq0$ summing
to one:
\[
E_t^{\mathrm{mix}}=\sum_jw_jE_t(\delta_j).
\]
The equivalent constant-bound parametrization is
$q_j=\phi(\delta_j,p_0)>p_0$. Components and weights are fixed before observing
the audited stream. With predictable per-step bounds, each log factor uses the
current $p^*_{u,k,t}$ from Equation~\eqref{eq:p0}; the stored bound trajectory
is part of the audit record.

\end{document}